\documentclass[a4paper,10pt]{article}

\usepackage[top=1in, bottom=1.25in, left=1.3in, right=1.3in]{geometry}
\setlength{\parskip}{5pt}

\usepackage{graphicx}
\usepackage{authblk}
\usepackage[round]{natbib}
\usepackage{hyperref}
\usepackage{amssymb,amsmath}
\usepackage{physics}
\usepackage{mathtools}
\usepackage{lipsum}
\usepackage{amsthm}
\usepackage{tikz}
\usepackage{kbordermatrix}
\usepackage{array,booktabs}

\usepackage{palatino}


\newtheorem{prop}{Proposition}
\newtheorem{lemm}{Lemma}
\newtheorem{cor}{Corollary}[prop]

\begin{document}

\title{Variational inequalities and mean-field approximations for partially observed systems of queueing networks.\thanks{Work supported by RCUK through the Horizon Digital Economy Research grants (EP/G065802/1, EP/M000877/1).
}}

\author[a]{Iker Perez\thanks{Corresponding author address: Horizon Digital Economy Research, Triumph Road, Nottingham, NG7 2TU. Email: iker.perez@nottingham.ac.uk}}
\author[b]{Giuliano Casale}
\affil[a]{School of Mathematical Sciences, University of Nottingham}
\affil[b]{Department of Computing, Imperial College London}
\date{\vspace{-25pt}}

\maketitle

\begin{abstract}
Queueing networks are systems of theoretical interest that find widespread use in the performance evaluation of interconnected resources. In comparison to counterpart models in genetics or mathematical biology, the stochastic (jump) processes induced by queueing networks have distinctive \textit{coupling} and \textit{synchronization} properties. This has prevented the derivation of variational approximations for conditional representations of transient dynamics, which usually rely on simplifying independence assumptions. In this paper, we present a model augmentation to a multivariate counting process for interactions across service stations, and we first enable the variational evaluation of mean-field measures for partially-observed open and closed multi-class networks. Finally, we show that our approximating framework offers a viable, efficient and improved alternative for inference and uncertainty quantification tasks, where existing variational or numerically intensive solutions do not work.
\end{abstract}

\vspace{10pt}
\textbf{Keywords}: Queueing networks, Bayesian variational inference, mean-field methods, Markov jump process, non-homogeneous counting process

\section{Introduction}

\textit{Queueing networks} (QNs) are systems of theoretical and practical interest in the design of computing systems \citep{kleinrock1976queueing}, as well as in the optimization of business processes arising in factories, shops, offices or hospitals \citep{buzacott1993stochastic,Koole2002,Osorio2009996}. They are formed by interconnected resources routing and processing jobs, and their behaviour often gives rise to complex families of stochastic (jump) processes. In applications, they provide the means to assess modifications, diagnose performance and evaluate robustness in multiple service infrastructures. 

Formally, QNs are associated with \textit{coupled} or \textit{synchronized} (Markov) jump processes. Here, every change in a marginal population count (jobs within a queue) is triggered by an arrival (or departure) from an additional resource. The multivariate behaviour across populations in the underlying jump model is thus strongly interlinked; preventing the derivation of variational approximations for transient dynamics that rely on simplifying independence assumptions \citep[cf.][]{opper2008variational,cohn2010mean}. In this paper, our main contribution is to present a complete probabilistic (hierarchical) formulation of open and closed networks, and to first enable the variational evaluation of approximating mean-field measures for such partially-observed \textit{coupled} systems. Additionally, we discuss the relation to analogue tasks in domains such as genetics or mathematical biology, and present use cases of our results within uncertainty quantification and Bayesian inferential tasks, applied to examples where existing MCMC/variational solutions either (i) do not scale well or (ii) are unusable. The results within this paper are relevant for single or multi-class Markovian systems (and related stochastic models in genetics or biology), with either finite or infinite processors, multiple types of service disciplines and probabilistic routings.

\noindent\textbf{Motivation.} The quantitative basis for the evaluation of a networked system is a set of estimates for the service requirements in the resources. To that end, a foundational inferential study begins with a set of measurements (queue lengths, visit counts, response times, \dots) along with an associated likelihood function interrelated with service rates and the underlying stochastic (jump) dynamics. However, the measurements often provide little indirect information \citep{sutton2011}, and there exist strong impediments to integrate over uncertainty in the jump process trajectories \citep{armero1994assessment,perez2017auxiliary}. Recently proposed techniques in \cite{sutton2011,perez2017auxiliary} are only relevant for reduced types of systems, and are sustained on intense \textit{Markov Chain Monte Carlo} sampling procedures. Thus, they suffer from scalability problems associated with complex multivariate temporal dependencies \citep{bobbio2008analysis}, as a result of the aforementioned \textit{synchronization} properties, along with job priorities or the existence of feedback loops. This differentiates the network evaluation problem from analogue statistical tasks for jump processes associated with mathematical biology \citep{hobolth2009} or genetics \citep{golightly2015bayesian}. Currently, practical solutions often rely on steady-state metrics \citep{kraft2009estimating} or end-to-end measurements \citep{Liu200636}; thus, the effects of system uncertainty are not understood \citep[see][for a review]{Spinner201551}. In this paper, we show that a variational framework targeted at conditional representations of transient dynamics offers a viable (and efficient) alternative to existing numerically intensive solutions presented in \cite{sutton2011,perez2017auxiliary}, in order to enable foundational inferential and uncertainty quantification tasks with QNs and their underlying jump process representations. 

\noindent\textbf{Structure.} The rest of the paper is organised as follows.  In Section \ref{introSec} we offer a (probabilistic) hierarchical formulation of a queueing system along with the problem statement. Section \ref{overSection} introduces an approximating network model and offers a summary of the main results to be presented later in the paper. Sections \ref{countModelSec} and \ref{FunctionalSec} include the main contributions in our work; these discuss the treatment of the network system by means of interactions in network resources, and further present the results, proofs and technical details that contribute to later algorithmic constructions. In Section \ref{exmpls}, we guide the reader through applications of our results within inferential and network evaluation tasks and in Section \ref{discussion} we conclude the paper with a discussion.

\section{Queueing systems and jump processes} \label{introSec}

In the following, we employ shorthand notation for densities, base measures and distributions whenever these are clear from the context.  From here on, let $(\Omega,\mathcal{F})$ denote a measurable space with the regular conditional probability property, supporting the various rates, trajectories and observations. A general form queueing network comprises some $M\in\mathbb{N}$ service stations along with a \textit{set of job classes} $\mathcal{C}$. The stations are connected by a \textit{network topology} that governs the underlying routing mechanisms; when a job is serviced in a station, it can either queue for service at a different node, or depart the network. Such topology is often defined as a set of \textit{routing probability matrices} $\{P^c\}_{c\in\mathcal{C}}$, with elements $p^c_{i,j}$ that denote the probability for a class $c\in\mathcal{C}$ job to immediately transit to queueing station $j$ after service completion in station $i$, for all $0\leq i,j \leq M$. In open queueing systems, the index $0$ is used as a \textit{virtual} external node that represents the source and destination of job arrivals and departures to, and from, the network. In closed systems, this index may either not exist, or instead refer to a delay server that routes departing jobs back into the network. Also, it holds that $\sum_{j=0}^M p^c_{i,j} = 1$, for all $0\leq i\leq M, c\in\mathcal{C}$. 

We address time-homogeneous Markovian systems that are parametrized by exponential inter-arrival and service times, with non-negative rates $\boldsymbol{\mu} = \{ \mu^c_i \in\mathbb{R}_+  : 0\leq i\leq M, c\in\mathcal{C}\}$, which may vary across service stations and job classes. The servers in the network stations may have finite or infinite processors, and service disciplines can differ across a range of \textit{processor sharing} (PS) policies, \textit{first-come first-served} (FCFS) and variations including \textit{last-come first-served} (LCFS) or \textit{random order} (RO) nodes. In some cases, FCFS processors may require shared processing times across the various job classes (cf. \cite{baskett1975open}). For simplicity and ease of notation, class \textit{switching}, service \textit{priorities} or queue-length dependent service rates are not discussed in detail, however, these follow naturally and we later present some examples of such instances. Under standard exponential service assumptions, the underlying system behaviour is described by an MJP $X=(X_t)_{t\geq 0}$ with values defined in a measurable space $(\mathcal{S},\mathcal{P}(\mathcal{S}))$. Here, $\mathcal{S}$ denotes a countable set of feasible states in the network, usually infinite in open or mixed systems and finite within closed ones; $\mathcal{P}(\mathcal{S})$ denotes the power set of $\mathcal{S}$. We allow for $\mathcal{S}$ to support vectors of integers that represent job counts across the various class types and service nodes, and denote by $X_t^{i,c}$ the number of class $c$ jobs in station $i>0$ at time $t\geq 0$. Note that here we ignore the loads in delay nodes $(i=0)$ within closed systems, since these are uniquely determined given the number of jobs in the remaining stations. The \textit{infinitesimal generator matrix} $Q$ of $X$ is such that
$$\mathbb{P}(X_{t+\mathrm{d}t}=x'|X_{t}=x) = \mathbb{I}(x=x') + Q_{x,x'}\mathrm{d}t + o(\mathrm{d}t)$$
for all $x,x'\in\mathcal{S}$. This can be an infinite matrix, it is generally sparse and its entries describe rates for transitions across states in $\mathcal{S}$. Rows in $Q$ must sum to $0$ so that $Q_{x,x'}\geq 0$ for all $x \neq x'$, and $Q_x \coloneqq Q_{x,x} = - \sum_{x'\in\mathcal{S}: x\neq x'} Q_{x,x'}$. 

Hence, jumps in the process $X$ are caused by jobs being routed through nodes in the underlying network model. We often say that a state $x'\in\mathcal{S}$ is \textit{accessible} from $x\in\mathcal{S}$, and write $x\xrightarrow{i,j,c} x'$ for its corresponding jump, if $x'$ may be reached from $x$ by means of a class-$c$ job transition between the stations $i$ and $j$, in the direction $i\rightarrow j$. We further denote $$\mathcal{T}=\{(i,j,c)\in\{0,\dots,M\}^2\times\mathcal{C}: p_{i,j}^c>0\}$$ for the finite set of all \textit{feasible} job transitions in the system, and we remark that the generator $Q$ of $X$ is populated by some positive real-valued rates $\boldsymbol{\lambda}= \{ \lambda_{\boldsymbol{\eta}} \in\mathbb{R}_+ \, : \, {\boldsymbol{\eta}}\in\mathcal{T}\} $ that define the intensities for these job routings, with $\lambda_{i,j,c} = \mu^c_{i}\cdot p^c_{i,j}$ for all $(i,j,c)\in\mathcal{T}$. 

In Figure \ref{exmplQueues} we observe diagrams that illustrate this notation in an open single-class network. On the left, we see $3$ stations with different rates, disciplines and server counts. The topology $P$ is such that $|\mathcal{T}|=5$ and $p_{0,1}=1-p_{0,2}\in (0,1)$, $p_{1,3}=p_{2,3}=p_{3,0}=1$ ($p_{i,j}=0$ otherwise). On the right, we find the corresponding job transition rates across the $4$ pairs of connected nodes. In this single-class example, $X$ monitors counts across the stations s.t. $X_t=(X^{1}_t,X^{2}_t,X^{3}_t)\in\mathcal{S}$ for all $t\geq 0$; also, the generator $Q$ is an infinite matrix with $\textstyle Q_{x,x'}=\lambda_{i,j}\cdot (K_i \wedge x_i )$ for all pairs $x,x'\in\mathcal{S}$ with  associated transition $x\xrightarrow{i,j} x'$, where $K_i,x_i\in\mathbb{N}_0$ denote the number of processors and the queue-length within station $i\geq 0$. We finally have $K_1=1, K_2=\infty$ and $K_3=2$; at the virtual node, it holds $K_0 \wedge x_0 = 1$ always. Thus, note that transition rates in $X$ further depend on the network loads, and resemble kinetic laws within chemical reaction models \citep{golightly2015bayesian}.
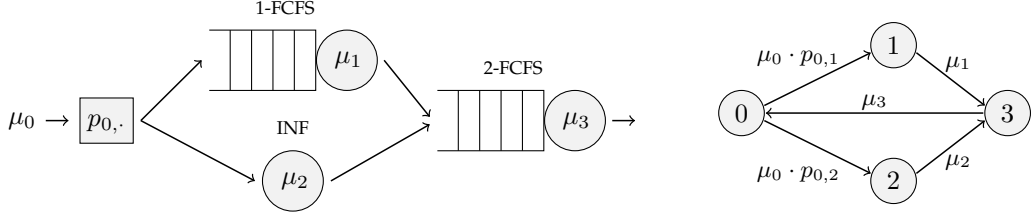
\begin{figure}[h!]
\centering
\begin{tikzpicture}
\draw (0,0.1) -- ++(1.4cm,0) -- ++(0,-0.8cm) -- ++(-1.4cm,0);
\foreach \i in {1,...,4}
  \draw (1.4cm-\i*8pt,0.1) -- +(0,-0.8cm);
\filldraw[fill=black!05!white] (1.4cm+0.41cm,-0.3cm) circle [radius=0.4cm]; 
\node at (1,0.4cm) {\scriptsize 1-FCFS};

\filldraw[fill=black!05!white] (1.1cm,-1.4cm-0.5cm) circle [radius=0.4cm]; 
\node at (1.1,-1.2cm) {\scriptsize INF};

\draw (3cm,-0.7cm) -- ++(1.4cm,0) -- ++(0,-0.8cm) -- ++(-1.4cm,0);
\foreach \i in {1,...,4}
  \draw (4.4cm-\i*8pt,-0.7cm) -- +(0,-0.8cm);
\filldraw[fill=black!05!white] (4.4cm+0.41cm,-0.7cm-0.4cm) circle [radius=0.4cm]; 
\node at (4,-0.4cm) {\scriptsize 2-FCFS};

\filldraw[fill=black!05!white] (-1.7cm,-0.8cm) rectangle (-1cm,-1.4cm);

\draw[->,line width=0.20mm] (-0.9cm,-1.1cm) -- (-0.1,-0.3cm);
\draw[->,line width=0.20mm] (-0.9cm,-1.1cm) -- (0.6,-1.9cm);
\draw[<-,line width=0.20mm] (-1.85cm,-1.1cm) -- +(-0.3cm,0cm) node[left] {$\mu_0$};
\draw[->,line width=0.20mm] (2.3cm,-0.3cm) -- (2.9cm,-1.05cm);
\draw[->,line width=0.20mm] (1.6cm,-1.9cm) -- (2.9cm,-1.15cm);
\draw[->,line width=0.20mm] (5.3cm,-1.1cm) -- +(0.3cm,0cm);

\node at (1.83,-0.3cm) {$\mu_1$};
\node at (1.13,-1.9cm) {$\mu_2$};
\node at (4.83,-1.1cm) {$\mu_3$};
\node[align=center] at (-1.35cm,-1.15cm) {$p_{0,\cdot}$};

\filldraw[fill=black!05!white] (9cm,-0.1cm) circle [radius=0.3cm]; 
\filldraw[fill=black!05!white] (10.5cm,-1cm) circle [radius=0.3cm]; 
\filldraw[fill=black!05!white] (9cm,-1.9cm) circle [radius=0.3cm]; 
\filldraw[fill=black!05!white] (7cm,-1cm) circle [radius=0.3cm]; 
\node at (9,-0.1cm) {$1$};
\node at (10.5,-1) {$3$};
\node at (9,-1.9cm) {$2$};
\node at (7,-1cm) {$0$};
\draw[->,line width=0.20mm] (7.3cm,-0.9cm) -- (8.7,-0.2cm);
\draw[->,line width=0.20mm] (7.3cm,-1.1cm) -- (8.7,-1.8cm);
\draw[->,line width=0.20mm] (9.3cm,-0.2cm) -- (10.2,-0.9cm);
\draw[->,line width=0.20mm] (9.3cm,-1.8cm) -- (10.2,-1.1cm);
\draw[->,line width=0.20mm] (10.18cm,-1cm) -- (7.32,-1cm);
\node at (8.75,-0.85cm) {\small $\mu_{3}$};
\node at (9.85,-0.35cm) {\small $\mu_{1}$};
\node at (9.85,-1.65cm) {\small $\mu_{2}$};
\node at (7.75,-0.28cm) {\small $\mu_{0}\cdot p_{0,1}$};
\node at (7.75,-1.82cm) {\small $\mu_{0}\cdot p_{0,2}$};

\end{tikzpicture}
\vspace{10pt}
\caption{Left, open bottleneck network with $3$ service stations. Shaded circles indicate servers, queueing areas are pictured as empty rectangles. The box is a probabilistic junction for the routing of arrivals. Right, job transition intensities across network nodes.} \label{exmplQueues}
\end{figure}

\subsection{A hierarchical formulation of queueing systems}

Within a hierarchical multilevel formulation, rates in $\boldsymbol{\lambda}$ have a distribution (or \textit{image})  $\mathbb{P}_{\boldsymbol{\lambda}}\equiv \boldsymbol{\lambda}_*\mathbb{P}$ under a reference measure $\mathbb{P}$ on $(\Omega,\mathcal{F})$. We assume this to admit a density $f_{\boldsymbol{\lambda}}$ w.r.t. a base measure that will further induce (by properties of exponential transitions) distributions over the service rates $\boldsymbol{\mu}$ and routing topology. Next, note that a network trajectory over a finite interval is a piecewise deterministic jump process, such that $X\equiv (\boldsymbol{t},\boldsymbol{x})$ is represented by a sequence of transition times $\boldsymbol{t}$ along with states $\boldsymbol{x}$. Each pair $(\boldsymbol{t},\boldsymbol{x})$ is furthermore a random variable on a measurable space $(\mathcal{X},\Sigma_\mathcal{X})$ supporting finite $\mathcal{S}$-valued trajectories, and a conditional density $f_{X|\boldsymbol{\lambda}}$ may be defined w.r.t a dominating base measure $\mu_\mathcal{X}$, s.t. the regular conditional probability $\mathbb{P}(A|\boldsymbol{\lambda})$, $A\in\mathcal{F}$ satisfies
$$\mathbb{P}(X^{-1}(B)|\boldsymbol{\lambda}) = \int_B f_{X|\boldsymbol{\lambda}}(\boldsymbol{t},\boldsymbol{x}) \,\mu_\mathcal{X}(d\boldsymbol{t},d\boldsymbol{x})$$
for all $B\in\Sigma_{\mathcal{X}}$ (see Appendix \ref{notationApdx} for details). In this case, 
\begin{align}
f_{X|\boldsymbol{\lambda}}(\boldsymbol{t},\boldsymbol{x}) = \pi(x_0)\,
 e^{Q_{x_{I}}(T-t_I)}\, \prod_{i=1}^I Q_{x_{i-1},x_{i}}\, e^{Q_{x_{i-1}}(t_i-t_{i-1})}, \label{densPath}
\end{align}
for every pair of ordered times $\boldsymbol{t}=\{0,t_1,\dots,t_I\}$ in $[0,T]$ and states $\boldsymbol{x}=\{x_0,\dots,x_I\}$. Here, $\pi(\cdot)$ denotes an arbitrary distribution over initial states, and $Q\equiv Q(\boldsymbol{\lambda})$ is the matrix of infinitesimal rates associated with fixed values in $\boldsymbol{\lambda}$. The queueing network model is thus fully parametrized by a collection of hyper-parameters, and analogue modelling choices for \textit{continuous-time Markov chains} (CTMCs) or MJPs can be found in \cite{huelsenbeck2002inferring,baele2010using} or \cite{zhao2016bayesian}, to name a few. Finally, note that this set-up is not suitable for traditional probabilistic studies of queueing systems by means of balance equations, due to parameter uncertainty; however, we will show that it offers an appropriate framework for approximate transient analyses, parameter inference and reverse network evaluation (uncertainty quantification) tasks.

\subsection{Network evaluation and problem statement}

Let $T>0$ denote some arbitrary terminal time and $\boldsymbol{x}_0\in\mathcal{S}$ an initial state in $X$. For simplicity, this is assumed to be a $0$-valued vector, where no jobs populate the system. Now, let $0\leq t_1<\dots<t_K\leq T$ denote some fixed network monitoring times along with \textit{observation} variables $\{O_k\in\mathcal{O},k=1,\dots,K\}$, for some arbitrary support set $\mathcal{O}$, such that 
\begin{align}
\mathbb{P}\Big(\bigcap_{k=1}^K O_k^{-1}(\boldsymbol{o}_k)\big|X\Big) = \prod_{k=1}^K \mathbb{P}(O_k^{-1}(\boldsymbol{o}_k)|X)  = \prod_{k=1}^K f_{O|X_{t_k}}(\boldsymbol{o}_k) 
 \label{obsLik}
\end{align}
for any sequence of elements $\boldsymbol{o}_1,\dots,\boldsymbol{o}_K$ where $\boldsymbol{o}_k$ denotes the time-$t_k$ network observation across all nodes. Hence, any two observations are mutually independent if conditioned on their network states. The term $f_{O|X_{t_k}}$ stands for a conditional mass function assigned to measurements across the $M$ nodes; defined w.r.t a counting measure $\mu_{\mathcal{O}}$. In this paper it is assumed that $f_{O|x}>0$ (everywhere) for all $x\in\mathcal{S}$, however, deterministic observations such as queue lengths can be easily approximated by means of \textit{regularised} indicator functions; we discuss such examples within Section \ref{exmpls}. Extensions to continuous settings are straightforward.

Now, let $\mathbb{P}(A|\boldsymbol{o}_1,\dots,\boldsymbol{o}_K),\, A\in\mathcal{F},\, (\boldsymbol{o}_1,\dots,\boldsymbol{o}_K)\in  \mathcal{O}^K$ denote the regular conditional probability across global events and observations. Our interest lies in its induced distribution over the intensity rates (which we denote $\mathbb{P}_{\boldsymbol{\lambda}|\boldsymbol{o}_1,\dots,\boldsymbol{o}_K}$). Within (Bayesian) inferential settings, this induced distribution is referred to as a \textit{posterior}; it exists and admits a density carried by its corresponding \textit{prior} $\mathbb{P}_{\boldsymbol{\lambda}}$ (see Appendix \ref{notationApdx}), moreover, the transformation is proportional to a weighted product of network paths, and defined by the Radon-Nikodym derivative
\begin{align}
\frac{d\mathbb{P}_{\boldsymbol{\lambda}|\boldsymbol{o}_1,\dots,\boldsymbol{o}_K}}{d\mathbb{P}_{\boldsymbol{\lambda}}} = \frac{\int_{\mathcal{X}}\prod_{k=1}^K \mathbb{P}(O_k^{-1}(\boldsymbol{o}_k)|\boldsymbol{t},\boldsymbol{x})\, f_{X|\boldsymbol{\lambda}}(\boldsymbol{t},\boldsymbol{x})\, \mu_{\mathcal{X}}(d\boldsymbol{t},d\boldsymbol{x})}{\mathbb{P}(O_1^{-1}(\boldsymbol{o}_1)\cap\dots\cap O_k^{-1}(\boldsymbol{o}_k)) } , \label{likRatio}
\end{align}
which corresponds to Bayes' equation. There, the denominator denotes a normalising constant that integrates over trajectories and rates. This transformation will often induce a density representation $f_{\boldsymbol{\lambda}|\boldsymbol{o}_1,\dots,\boldsymbol{o}_K}$ for the posterior distribution w.r.t a suitable (Lebesgue) base. In these cases, we may think of the above derivative as a Likelihood-ratio. However, this ratio poses a tractability problem, that is, the integral over trajectories cannot be computed analytically and must be approximated. This is a common problem in inferential tasks with jump processes \citep[cf.][]{hobolth2009,rao13a,perez2019scalable}, and proposed solutions often rely on intensive MCMC procedures that iterate between trajectories and parameters; including direct sampling, rejection sampling or uniformization-based methods. Yet, algorithms are hard to implement, computationally demanding or only applicable to reduced classes of problems. In the case of queueing networks, strong temporal dependencies in the stochastic trajectories $X$ impose hard coupling properties amongst rates and paths \citep{sutton2011}, which limits the applicability of numerical state of the art solutions to the simplest types of network evaluation problems \citep{perez2017auxiliary}.

In the following, we present theoretical results leading to an alternative variational design to approximate the induced rate densities under the posterior measure in \eqref{likRatio}. For the purpose, we describe the inherent complexity of jump processes induced by networks of queues, and we further expose (and overcome) the multiple limitations of mean-field methods previously presented in \cite{opper2008variational} and references therein.

\section{Overview of results} \label{overSection}

Under the natural measure $\mathbb{P}$ tied to the infinitesimal generator $Q$, an underlying MJP $X$ as introduced in Section \ref{introSec} is supported in a set $\mathcal{S}$ of feasible vectors of integers, which is often just $\mathcal{S}=\mathbb{N}_0^{|\mathcal{C}|\times M}$.  Now, assume the existence of an approximating measure $\tilde{\mathbb{P}}$  on an augmented space of network paths $\tilde{\mathcal{X}}$, such that we further assign a mass to network loads with negative values. Rates for transitions across the states are induced by a generator $\tilde{Q}$ with 
\begin{align}
\tilde{Q}_{x,x'}=\delta+Q_{x,x'}\, , \quad \delta>0\, \label{deltaGenerator}
\end{align}
whenever $x\xrightarrow{i,j,c} x'$ is such that $(i,j,c)\in\mathcal{T}$, and $\tilde{Q}_{x,x'}=Q_{x,x'}=0$ otherwise. Hence, intensities for job transitions between nodes $i$ and $j$ are strictly positive whenever $p^c_{i,j}>0$, for any class and regardless of the network loads. In the event of a $t$-time class-$c$ job departure from a station $i$ when $X^{i,c}_t\leq0$, then we assume this job to be virtually generated and a unit will be subtracted from the state vector at the corresponding index, in order to represent the fact and preserve the global population count. For values of $\delta$ small enough, the $\tilde{\mathbb{P}}$-density assigned to trajectories outside of $\mathcal{X}$ is negligible. Note that a density $\tilde{f}$ in \eqref{densPath} with generator $\tilde{Q}$ in \eqref{deltaGenerator} is such that, for any network path $(\boldsymbol{t},\boldsymbol{x})\in\tilde{\mathcal{X}}\backslash\mathcal{X}$, it holds $$\tilde{f}_{X|\boldsymbol{\lambda}}(\boldsymbol{t},\boldsymbol{x})\leq \prod_{i=1}^I \tilde{Q}_{x_{i-1},x_i}=O(\delta^r)\quad \text{as} \quad\delta\rightarrow 0$$ for some $r\in\{1,\dots,I\}$. Thus, $X_*\tilde{\mathbb{P}}(\mathcal{X})=1-\int_{\tilde{\mathcal{X}}\backslash\mathcal{X}} \tilde{f}_{X|\boldsymbol{\lambda}}d\tilde{\mu}_{\mathcal{X}}\xrightarrow{\delta\rightarrow 0} 1$, where  $\tilde{\mu}$ denotes an appropriately augmented base measure, and the limiting system dynamics under $\tilde{\mathbb{P}}$ will offer a perfect approximation to the original network model. Within the rest of the paper,
\begin{itemize}
\item In Section \ref{countModelSec}, we present a counting process over job transitions in the augmented network with generator $\tilde{Q}$ in \eqref{deltaGenerator}, and introduce an alternative absolutely continuous mean-field measure $\mathbb{Q}$. In Lemma \ref{lowerBoundLemma}, we derive a lower bound to the equivalent log-likelihood for the network measurements.
\item Propositions \ref{Prop1} and \ref{Prop2} within Section \ref{FunctionalSec} inspect the structure of $\mathbb{Q}$ that approximates the regular conditional probability and corresponding likelihood-ratio in \eqref{likRatio}. Corollaries \ref{Cor1} and \ref{Cor2} focus on the rate density $d\mathbb{P}_{\boldsymbol{\lambda}|\boldsymbol{o}_1,\dots,\boldsymbol{o}_K}$ by looking at conjugacy properties and limiting behaviour as $\delta\rightarrow 0$.
\item Finally, Section \ref{exmpls} describes applications of our results within inferential and evaluation tasks, allowing to approximate (image) measures across the various service rates $\mu$ and routing probabilities in $\mathcal{P}$, conditioned on network measurements. This includes comparisons with existing alternative methods.
\end{itemize}

\section{A counting process over job transitions} \label{countModelSec}

A network system as introduced in Section \ref{introSec} further gives raise to a multivariate Markov counting process $Y=(Y_t)_{t\geq 0}$ on $(\Omega,\mathcal{F})$, where each indexed $Y_t=(Y^{\boldsymbol{\eta}}_t)_{{\boldsymbol{\eta}}\in\mathcal{T}}\in\mathcal{S}^Y$ accounts for job transitions across all classes in $\mathcal{T}$, up to a time $t\geq 0$. That is, each $Y_t^{{\boldsymbol{\eta}}}$ denotes the cumulative count in $Y$ of transitions $x\xrightarrow{{\boldsymbol{\eta}}} x'$ in $X$, with $x,x'\in\mathcal{S}$, and $Y_0^{{\boldsymbol{\eta}}}=0$ for all $\boldsymbol{\eta}\in\mathcal{T}$. At a basic level, these are simply non-decreasing counting processes for job transitions in the directions defined within $\mathcal{T}$. We further note that $|\mathcal{T}|$ is often small, as underlying network topologies impose strict routing mechanisms. The support set $\mathcal{S}^Y$ for the counting process is determined by the connectivity structure amongst the stations. Under the approximating measure $\tilde{\mathbb{P}}$, it holds $\mathcal{S}^Y=\mathbb{N}_0^{|\mathcal{T}|}$, since job transitions may occur regardless of network loads. Now, let $$\mathcal{T}_{i,c}^{\leftarrow}=\{\boldsymbol{{\boldsymbol{\eta}}}\in\mathcal{T} :\eta_2=i,\eta_3=c\} \quad \text{and} \quad \mathcal{T}_{i,c}^{\rightarrow}=\{\boldsymbol{\eta}\in\mathcal{T} :\eta_1=i,\eta_3=c\}$$
denote the subsets of $\mathcal{T}$ that include class $c\in\mathcal{C}$ job transitions to, and from, the network node $i\in\{0,\dots,M\}$, respectively. Also, recall that $X_t^{i,c}$ denotes the number of class $c$ jobs in station $i>0$ at time $t\geq 0$, then 
\begin{align}
X^{i,c}_t = \sum_{{\boldsymbol{\eta}}\in\mathcal{T}^{\leftarrow}_{i,c}}  Y_t^{{\boldsymbol{\eta}}} -  \sum_{{\boldsymbol{\eta}}\in\mathcal{T}^{\rightarrow}_{i,c}} Y_t^{{\boldsymbol{\eta}}}   \label{relXandY}
\end{align}
for all $t\geq 0$, assuming initially empty networked systems. We note that for all ${\boldsymbol{\eta}}=(i,j,c)\in\mathcal{T}$ it holds ${\boldsymbol{\eta}}\in\mathcal{T}_{j}^{\leftarrow}$ and ${\boldsymbol{\eta}}\in\mathcal{T}_{i}^{\rightarrow}$. Thus, paths in $X$ and $Y$ differ in that the former is \textit{coupled}, i.e. a job transition in the direction $\boldsymbol{\eta}=(i,j,c)$ is relevant to (and thus is synchronized across) a pair of marginal processes $(X^{i,c}_t)_{t\geq0},(X^{j,c}_t)_{t\geq0}$; in the latter, this is only relevant to the indexed process $(Y^{\boldsymbol{\eta}}_t)_{t\geq 0}$.

In view of \eqref{relXandY}, we further denote $x_{i,c} = \sum_{{\boldsymbol{\eta}}\in\mathcal{T}^{\leftarrow}_{i,c}}  y_{{\boldsymbol{\eta}}} -  \sum_{{\boldsymbol{\eta}}\in\mathcal{T}^{\rightarrow}_{i,c}} y_{{\boldsymbol{\eta}}} $ to the class-$c$ queue-length in station $i>0$ for any $y\in\mathcal{S}^Y$. Then, the $\tilde{\mathbb{P}}$-associated infinitesimal generator matrix $\Xi$ of $Y$ is such that $\textstyle\Xi_{y,y'}\equiv\textstyle\Xi_{y,{\boldsymbol{\eta}}} =\delta + \lambda_{{\boldsymbol{\eta}}}\cdot\big[\Upsilon(y,\eta_1,\eta_3)\vee 0\big]$ with a station load
\begin{align}
\textstyle\Upsilon(y,i,c) = x_{i,c} \cdot \Bigg(\dfrac{K_i}{\sum_{c'\in\mathcal{C}}x_{i,c'}} \wedge 1 \Bigg) \label{Upsil1}
\end{align}
for all jumps $y\xrightarrow{{\boldsymbol{\eta}}} y'$, ${\boldsymbol{\eta}}=(i,j,c)$, where the origin station $i>0$ has PS discipline (here we have set $0/0=0$), and
\begin{align}
\textstyle\textstyle\Upsilon(y,i,c) = K_i \wedge x_{i,c}  \label{Upsil2}
\end{align}
in stations $i>0$ with FCFS policy within single-class networks. We further have $\textstyle\Xi_{y,y'}=\delta+\lambda_{0,j,c}$ for arrivals from virtual nodes (in open networks) and $\textstyle\Xi_{y,y'}=\delta+\lambda_{0,j,c} \cdot (N+\sum_{{\boldsymbol{\eta}}\in\mathcal{T}^{\leftarrow}_{0,c}}  y_{{\boldsymbol{\eta}}} -  \sum_{{\boldsymbol{\eta}}\in\mathcal{T}^{\rightarrow}_{0,c}} y_{{\boldsymbol{\eta}}})$ for arrivals from delays, where $N$ denotes the job population in a closed system. Finally, $\Xi_y \coloneqq \Xi_{y,y} = - \sum_{y'\in\mathcal{S}^Y: y\neq y'} \Xi_{y,y'}$.

\subsection{A mean field decomposition and lower bound}

The likelihood for observation events in \eqref{obsLik} readily transfers to counts $Y$ by means of \eqref{relXandY}, we thus may write $ f_{O|Y_{t_k}}(\boldsymbol{o}_k)\equiv f_{O|X_{t_k}}(\boldsymbol{o}_k) $. Under the measure $\tilde{\mathbb{P}}$ network states can have negative values, the likelihood is undefined in such instances. Now, note that piecewise $\mathcal{S}^Y$-valued trajectories also represent elements $(\boldsymbol{t},\boldsymbol{y})$ in a space $(\mathcal{Y},\Sigma_{\mathcal{Y}})$, similar to network paths in $X$. Let $f_{Y|\boldsymbol{\lambda},\boldsymbol{o}_1,\dots,\boldsymbol{o}_K}$ be a density function, w.r.t. some base measure $\mu_\mathcal{Y}$, where for all $B\in\Sigma_{\mathcal{Y}}$ it holds $$\mathbb{P}(Y^{-1}(B)|\boldsymbol{\lambda},\boldsymbol{o}_1,\dots,\boldsymbol{o}_K) = \int_B f_{Y|\boldsymbol{\lambda},\boldsymbol{o}_1,\dots,\boldsymbol{o}_K} \,d\mu_\mathcal{Y}.$$ It may be shown by properties of conditional distributions that, conditioned on observations, $Y$ is a non-homogeneous semi-Markov process with \textit{hazard} functions
\begin{equation}
\Lambda_{y,y'}(t)=\Xi_{y,y'}\cdot\frac{\mathbb{P}(\bigcap_{k : t_k>t}O_k^{-1}(\boldsymbol{o_k})|Y_t=y')}{\mathbb{P}(\bigcap_{k : t_k>t}O_k^{-1}(\boldsymbol{o_k})|Y_t=y)} \label{hazards}
\end{equation}
for $y'\neq y$, and $\Lambda_{y}(t)=-\sum_{y'\neq y}\Lambda_{y,y'}(t)$,  s.t.
\begin{align*}
f_{Y|\boldsymbol{\lambda},\boldsymbol{o}_1,\dots,\boldsymbol{o}_K}(\boldsymbol{t},\boldsymbol{y}) =  \pi(y_0)\, e^{\int_{t_I}^{T} \Lambda_{y_{I}}(u)du} \, \prod_{i=1}^I \Lambda_{y_{i-1},y_{i}}(t_i)\, e^{\int_{t_{i-1}}^{t_i} \Lambda_{y_{i-1}}(u)du}\, .
\end{align*}
Here, $\Xi\equiv \Xi(\boldsymbol{\lambda})$ denotes the generator matrix associated with fixed values in $\boldsymbol{\lambda}$. For a deeper look at conditional jump processes we refer the reader to \cite{serfozo1972conditional,daley2007introduction}. This conditional counting process is of key importance, however, the structure of rates in \eqref{hazards} poses a trivial analytical impediment. In our approximating effort, we assume the existence of an alternative measure $\mathbb{Q}$ on $(\Omega,\mathcal{F})$. Under this measure, network trajectories in $X$ are subject to a mean-field decomposition across interacting pairwise nodes, that is, the $\mathbb{Q}$-law of $Y$ is that of a family of $|\mathcal{T}|$ independent non-homogeneous Poisson counting processes with state-dependent intensity functions $\nu^{\boldsymbol{\eta}} = (\nu^{\boldsymbol{\eta}}_t(\cdot))_{t\geq 0}$, for all ${\boldsymbol{\eta}}\in\mathcal{T}$. Here,
\begin{itemize}
\item Intensity rates for jumps $y\xrightarrow{t,{\boldsymbol{\eta}}} y'$, $t\geq 0$, are independent of $\boldsymbol{\lambda}$, change over time, and are given by $\nu^{\boldsymbol{\eta}}_t(y_{\boldsymbol{\eta}})$. 
\item Holding rates in $Y$ evolve according to $|\nu_t (Y_t)|$, with $\nu_t (Y_t)=-\sum_{\boldsymbol{\eta}\in\mathcal{T}}\nu^{\boldsymbol{\eta}}_t(Y^{\boldsymbol{\eta}}_t)$.
\item The state probability of the multivariate process $Y$ factors across the job transition directions, s.t.
$$\mathbb{Q}(Y_t=y) = \prod_{\boldsymbol{\eta}\in\mathcal{T}}\mathbb{Q}(Y^{\boldsymbol{\eta}}_t=y_{\boldsymbol{\eta}})$$ for every $y\in\mathcal{S}^Y$. 
\end{itemize}
In order to ensure computational tractability within forthcoming procedures, the intensity functions $\nu$ must be bounded from above by some arbitrary constant, s.t. $\nu^{\boldsymbol{\eta}}_t(y_{\boldsymbol{\eta}})\leq \bar{\nu}$ for all $t>0,\boldsymbol{\eta}\in\mathcal{T}$ and $y\in\mathcal{S}^Y$. Furthermore, transition rates in $\boldsymbol{\lambda}$ are assumed mutually independent under $\mathbb{Q}$, and admit undetermined densities $d\mathbb{Q}_{\lambda,\boldsymbol{\eta}}$, $\boldsymbol{\eta}\in\mathcal{T}$, that must integrate to $1$ on $(\mathbb{R}_+,\mathcal{B}(\mathbb{R}_+))$. We will later observe that this still induces dependence structures across the services rates and routing probabilities in the original network model. We finally note that $\mathbb{Q}$ and $\tilde{\mathbb{P}}$ are equivalent on $\mathcal{F}$, as both assign a positive measure to every marginally-increasing sequence of $\mathbb{N}_0^{|\mathcal{T}|}$-valued counts. 

\begin{lemm}[Mean-field lower bound] \label{lowerBoundLemma} Denote $\boldsymbol{O}=\bigcap_{k=1}^K O_k^{-1}(\boldsymbol{o}_k)$  and let $\tilde{\mathbb{P}}$ and $\mathbb{Q}$ be the probability measures on $(\Omega,\mathcal{F})$, as defined above. Recall notation $\textstyle\Xi_{y,y'}\equiv\textstyle\Xi_{y,{\boldsymbol{\eta}}}$ for jumps $y\xrightarrow{{\boldsymbol{\eta}}} y'$ with direction ${\boldsymbol{\eta}}$, then 
\begin{align}
\log\tilde{\mathbb{P}}(\boldsymbol{O}) &\geq \sum_{k=1}^K\mathbb{E}^{\mathbb{Q}}_{Y_{t_k}} \big[ \log  f_{O|Y_{t_k}}(\boldsymbol{o}_k)  \big]
 -\mathbb{E}^{\mathbb{Q}}_{\boldsymbol{\lambda}} \bigg[  \log\frac{ d\mathbb{Q}_{\boldsymbol{\lambda}} }{d\mathbb{P}_{\boldsymbol{\lambda}}} \bigg] \nonumber \\ 
&- \int_0^T  \mathbb{E}^{\mathbb{Q}}_{Y_t,\boldsymbol{\lambda}} \bigg[  \sum_{{\boldsymbol{\eta}}\in\mathcal{T}}  \nu^{\boldsymbol{\eta}}_t (Y^{\boldsymbol{\eta}}_t) \log \frac{ \nu^{\boldsymbol{\eta}}_t(Y^{\boldsymbol{\eta}}_t)}{\Xi_{Y_t,{\boldsymbol{\eta}}}}  - \Xi_{Y_t} + \nu_t(Y_t) \bigg] dt \label{varboundDeveloped}
\end{align}
offers a lower bound on the $\tilde{\mathbb{P}}$-probability of retrieved observation events.
\end{lemm}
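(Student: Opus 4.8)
The plan is to run the standard evidence‑lower‑bound (variational free‑energy) argument, but carried out on the marked counting process $Y$ rather than on $X$, so that the mean‑field factorisation of $\mathbb{Q}$ can be exploited. First I would rewrite the observation probability as a latent‑variable integral: conditioning on the $\sigma$‑algebra generated by $(\boldsymbol{\lambda},Y)$ and using the observation model \eqref{obsLik} (which transfers to $Y$ through \eqref{relXandY}), one has $\tilde{\mathbb{P}}(\boldsymbol{O}) = \mathbb{E}^{\tilde{\mathbb{P}}}\big[\prod_{k=1}^K f_{O|Y_{t_k}}(\boldsymbol{o}_k)\big]$, where the $\tilde{\mathbb{P}}$‑law of $(\boldsymbol{\lambda},Y)$ has density $f_{\boldsymbol{\lambda}}(\boldsymbol{\lambda})\,\tilde{f}_{Y|\boldsymbol{\lambda}}(\boldsymbol{t},\boldsymbol{y})$ with $\tilde{f}_{Y|\boldsymbol{\lambda}}$ the jump‑process density of $Y$ under $\Xi(\boldsymbol{\lambda})$, in the product form of \eqref{densPath}. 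Since $\mathbb{Q}\sim\tilde{\mathbb{P}}$ on $\mathcal{F}$, and under $\mathbb{Q}$ the pair $(\boldsymbol{\lambda},Y)$ is independent with $\boldsymbol{\lambda}$ of density $d\mathbb{Q}_{\boldsymbol{\lambda}}=\prod_{\boldsymbol{\eta}}d\mathbb{Q}_{\lambda,\boldsymbol{\eta}}$ and $Y$ the family of independent non‑homogeneous Poisson processes of density $q_Y(\boldsymbol{t},\boldsymbol{y})$, a change of measure on the latent $\sigma$‑algebra gives $\tilde{\mathbb{P}}(\boldsymbol{O}) = \mathbb{E}^{\mathbb{Q}}\big[\frac{d\mathbb{P}_{\boldsymbol{\lambda}}}{d\mathbb{Q}_{\boldsymbol{\lambda}}}\cdot\frac{\tilde{f}_{Y|\boldsymbol{\lambda}}}{q_Y}\cdot\prod_{k}f_{O|Y_{t_k}}(\boldsymbol{o}_k)\big]$. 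Jensen's inequality (concavity of $\log$) and splitting the logarithm then yield
$$\log\tilde{\mathbb{P}}(\boldsymbol{O}) \;\geq\; -\mathbb{E}^{\mathbb{Q}}_{\boldsymbol{\lambda}}\Big[\log\frac{d\mathbb{Q}_{\boldsymbol{\lambda}}}{d\mathbb{P}_{\boldsymbol{\lambda}}}\Big] \;+\; \mathbb{E}^{\mathbb{Q}}\Big[\log\frac{\tilde{f}_{Y|\boldsymbol{\lambda}}}{q_Y}\Big] \;+\; \sum_{k=1}^K\mathbb{E}^{\mathbb{Q}}_{Y_{t_k}}\big[\log f_{O|Y_{t_k}}(\boldsymbol{o}_k)\big],$$
in which the first and third terms already coincide with the matching terms of \eqref{varboundDeveloped}.

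It then remains to show that $\mathbb{E}^{\mathbb{Q}}\big[\log(\tilde{f}_{Y|\boldsymbol{\lambda}}/q_Y)\big]$ equals minus the last integral in \eqref{varboundDeveloped}. Writing both path densities explicitly and cancelling the common deterministic (empty) initial state, telescoping the holding‑time exponentials gives
$$\log\frac{\tilde{f}_{Y|\boldsymbol{\lambda}}(\boldsymbol{t},\boldsymbol{y})}{q_Y(\boldsymbol{t},\boldsymbol{y})} \;=\; \sum_{i=1}^I \log\frac{\Xi_{y_{i-1},\boldsymbol{\eta}_i}}{\nu^{\boldsymbol{\eta}_i}_{t_i}\big((y_{i-1})_{\boldsymbol{\eta}_i}\big)} \;+\; \int_0^T\big(\Xi_{Y_t}-\nu_t(Y_t)\big)\,dt,$$
where $\boldsymbol{\eta}_i$ is the direction of the $i$‑th transition, $\nu_u(y)=-\sum_{\boldsymbol{\eta}}\nu^{\boldsymbol{\eta}}_u(y_{\boldsymbol{\eta}})$, and each $\Xi_{y_{i-1},\boldsymbol{\eta}_i}>0$ thanks to the $\delta$‑augmentation in \eqref{deltaGenerator}. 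Taking $\mathbb{E}^{\mathbb{Q}}$, the time‑integral term passes through Fubini, while for the jump sum I would split over $\boldsymbol{\eta}\in\mathcal{T}$ and apply the point‑process compensator (Campbell / Lévy) identity: under $\mathbb{Q}$ the $\boldsymbol{\eta}$‑component of $Y$ is a counting process with predictable intensity $\nu^{\boldsymbol{\eta}}_t(Y^{\boldsymbol{\eta}}_t)$, so for the predictable integrand $g_{\boldsymbol{\eta}}(t,y)=\log\big(\Xi_{y,\boldsymbol{\eta}}/\nu^{\boldsymbol{\eta}}_t(y_{\boldsymbol{\eta}})\big)$ one has $\mathbb{E}^{\mathbb{Q}}\big[\sum_{i:\boldsymbol{\eta}_i=\boldsymbol{\eta}}g_{\boldsymbol{\eta}}(t_i,Y_{t_i^-})\big]=\mathbb{E}^{\mathbb{Q}}\big[\int_0^T g_{\boldsymbol{\eta}}(t,Y_t)\,\nu^{\boldsymbol{\eta}}_t(Y^{\boldsymbol{\eta}}_t)\,dt\big]$. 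Summing over $\boldsymbol{\eta}$ and recombining with the time integral gives $\mathbb{E}^{\mathbb{Q}}\big[\log(\tilde{f}_{Y|\boldsymbol{\lambda}}/q_Y)\big]=\mathbb{E}^{\mathbb{Q}}\big[\int_0^T\big(\sum_{\boldsymbol{\eta}}\nu^{\boldsymbol{\eta}}_t(Y^{\boldsymbol{\eta}}_t)\log\frac{\Xi_{Y_t,\boldsymbol{\eta}}}{\nu^{\boldsymbol{\eta}}_t(Y^{\boldsymbol{\eta}}_t)}+\Xi_{Y_t}-\nu_t(Y_t)\big)\,dt\big]$; since under $\mathbb{Q}$ the pair $(Y,\boldsymbol{\lambda})$ is independent but $\Xi$ depends on $\boldsymbol{\lambda}$, this is exactly $-\int_0^T\mathbb{E}^{\mathbb{Q}}_{Y_t,\boldsymbol{\lambda}}[\,\cdot\,]\,dt$ with the integrand appearing in \eqref{varboundDeveloped}, and the three pieces assemble to the claimed bound.

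The step I expect to be the main obstacle is the rigorous justification of the point‑process identity, not the algebra. One must check that, under $\mathbb{Q}$, $Y$ is a well‑defined non‑explosive multivariate counting process each of whose components $Y^{\boldsymbol{\eta}}$ admits $\int_0^t\nu^{\boldsymbol{\eta}}_s(Y^{\boldsymbol{\eta}}_s)\,ds$ as predictable compensator — the uniform bound $\nu^{\boldsymbol{\eta}}_t\leq\bar{\nu}$ guarantees non‑explosion and stochastic domination of $Y^{\boldsymbol{\eta}}_t$ by a $\mathrm{Poisson}(\bar{\nu}t)$ variable, hence finite moments of all orders — and that $g_{\boldsymbol{\eta}}(t,Y_t)\,\nu^{\boldsymbol{\eta}}_t(Y^{\boldsymbol{\eta}}_t)$ is $\mathbb{Q}\otimes dt$‑integrable, so that both the compensator formula and the Fubini exchange are licit; the combination of $\nu\leq\bar{\nu}$, the strict positivity $\Xi_{\cdot,\boldsymbol{\eta}}\geq\delta>0$, at most polynomial growth of $\Xi_{y,\boldsymbol{\eta}}$ in $y$ (so at most logarithmic growth of $g_{\boldsymbol{\eta}}$), and a mild moment assumption on $\mathbb{Q}_{\boldsymbol{\lambda}}$ delivers this. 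Two smaller points also need care: the change of measure is performed on the $\sigma$‑algebra generated by $(\boldsymbol{\lambda},Y)$ only — the conditioned process, with its hazards \eqref{hazards}, never enters, since under $\mathbb{Q}$ the law of $Y$ is the unconditional Poisson family — which is legitimate by the regular‑conditional‑probability property of $(\Omega,\mathcal{F})$ together with $\mathbb{Q}\sim\tilde{\mathbb{P}}$; and $f_{O|Y_{t_k}}$ must be understood as a bounded strictly positive extension of the observation mass function to states with negative entries, which (as noted in the text) carry negligible $\tilde{\mathbb{P}}$‑mass as $\delta\to0$ and make the bound meaningful.
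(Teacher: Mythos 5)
Your proposal is correct and reaches the bound through the same high-level skeleton as the paper: a change of measure from $\tilde{\mathbb{P}}$ to $\mathbb{Q}$ on the latent variables $(\boldsymbol{\lambda},Y)$, Jensen's inequality, and a decomposition of the resulting KL term into the rate part and the path part, with the observation term following from the conditional independence in \eqref{obsLik}. Where you genuinely diverge is in the evaluation of the path-space divergence $\mathbb{E}^{\mathbb{Q}}[\log(q_Y/\tilde{f}_{Y|\boldsymbol{\lambda}})]$: the paper discretizes time, expresses the divergence as a limit of sums of per-step discrete KL divergences (citing the argument of Opper and Sanguinetti), and identifies the limit as a Riemann integral; you instead write the two jump-process path densities explicitly, telescope the holding-time exponentials, and convert the sum over jumps into a time integral via the predictable-compensator (Campbell) identity for counting processes with intensity $\nu^{\boldsymbol{\eta}}_t(Y^{\boldsymbol{\eta}}_t)$. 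The two routes produce the identical integrand $\sum_{\boldsymbol{\eta}}\nu^{\boldsymbol{\eta}}_t\log(\nu^{\boldsymbol{\eta}}_t/\Xi_{Y_t,\boldsymbol{\eta}})-\Xi_{Y_t}+\nu_t(Y_t)$, and your sign bookkeeping (with $\Xi_{Y_t}$ and $\nu_t(Y_t)$ both defined as negative diagonal/holding quantities) is consistent with \eqref{varboundDeveloped}. Your route buys something the paper leaves implicit: it makes the regularity requirements explicit — non-explosion via $\nu\leq\bar{\nu}$, strict positivity $\Xi_{\cdot,\boldsymbol{\eta}}\geq\delta>0$ so the logarithms are finite (this is precisely where the $\delta$-augmentation earns its keep), and the integrability needed for the compensator formula and Fubini — whereas the discretization argument is the more standard and perhaps more elementary presentation in the variational-MJP literature. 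The one caveat, which you already flag, is that the compensator identity requires a mild moment condition on $\mathbb{Q}_{\boldsymbol{\lambda}}$ (e.g.\ finite first and log-moments) to control $\mathbb{E}^{\mathbb{Q}}[\log\Xi_{Y_t,\boldsymbol{\eta}}]$ and $\mathbb{E}^{\mathbb{Q}}[\Xi_{Y_t}]$; the paper tacitly assumes the same, so this is not a gap relative to the stated result.
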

\begin{proof} Note that
\begin{align}
\log\tilde{\mathbb{P}}(\boldsymbol{O})&=\log\int_{\mathcal{Y}\times\mathbb{R}_+^{|\mathcal{T}|}} \mathbb{P}(\boldsymbol{O}|Y)\, d(Y,\boldsymbol{\lambda})_*\tilde{\mathbb{P}} = \log \mathbb{E}^{\mathbb{Q}}_{Y,\boldsymbol{\lambda}} \bigg[ \mathbb{P}(\boldsymbol{O}|Y)\, \frac{d(Y,\boldsymbol{\lambda})_*\tilde{\mathbb{P}}}{d(Y,\boldsymbol{\lambda})_*\mathbb{Q}} \bigg] \nonumber \\
&\geq \mathbb{E}^{\mathbb{Q}}_{Y} \big[ \log  \mathbb{P}(\boldsymbol{O}|Y)\big]  - \mathbb{E}^{\mathbb{Q}}_{Y,\boldsymbol{\lambda}} \bigg[ \log  \frac{d(Y,\boldsymbol{\lambda})_*\mathbb{Q}}{d(Y,\boldsymbol{\lambda})_*\tilde{\mathbb{P}}} \bigg] \label{varbound}
\end{align}
where we use Jensen's inequality for finite measures. This is known as a variational mean-field lower bound on the log-likelihood, and
\begin{align*}
\mathbb{E}^{\mathbb{Q}}_{Y} \big[ \log  \mathbb{P}(\boldsymbol{O}|Y)\big] &= \mathbb{E}^{\mathbb{Q}}_{Y} \big[ \log   \prod_{k=1}^K f_{O|Y_{t_k}}(\boldsymbol{o}_k)  \big] = \sum_{k=1}^K\mathbb{E}^{\mathbb{Q}}_{Y_{t_k}} \big[ \log  f_{O|Y_{t_k}}(\boldsymbol{o}_k)  \big] 
\end{align*}
follows directly from \eqref{obsLik}. The negative part in \eqref{varbound} is the \textit{Kullback-Leibler} (KL) divergence between image measures of  $\mathbb{Q}$ and $\tilde{\mathbb{P}}$. By noting that these share base measures, and $Y,\boldsymbol{\lambda}$ are independent under $\mathbb{Q}$, it holds
\begin{align}
\mathbb{E}^{\mathbb{Q}}_{Y,\boldsymbol{\lambda}} \bigg[ \log  \frac{d(Y,\boldsymbol{\lambda})_*\mathbb{Q}}{d(Y,\boldsymbol{\lambda})_*\tilde{\mathbb{P}}} \bigg] &= \mathbb{E}^{\mathbb{Q}}_{\boldsymbol{\lambda}} \bigg[  \log\frac{ d\mathbb{Q}_{\boldsymbol{\lambda}} }{d\mathbb{P}_{\boldsymbol{\lambda}}} \bigg]
 \, + \, \mathbb{E}^{\mathbb{Q}}_{\boldsymbol{\lambda}} \bigg[ \mathbb{E}^{\mathbb{Q}}_{Y} \bigg[ \log  \frac{g_Y}{f_{Y|\boldsymbol{\lambda}}}  \bigg] \bigg] , \label{eq1Proof1}
\end{align}
where $g_Y$ and $f_{Y|\boldsymbol{\lambda}}$ denote the $Y$-trajectory densities associated with rates $\nu^{\boldsymbol{\eta}}$ and $\Xi(\boldsymbol{\lambda})$, respectively. The last term in \eqref{eq1Proof1} is a $\mathbb{Q}$-average of the KL divergence on $Y$,  where the mean is taken across the infinitesimal transition rates. For a fixed starting $Y_0\in\mathcal{S}^Y$, the inner expectation is shown in \cite{opper2008variational} to take the equivalent form 
\begin{align*}
\mathbb{E}^{\mathbb{Q}}_{Y} \bigg[ \log  \frac{g_Y}{f_{Y|\boldsymbol{\lambda}}} \bigg] &= \lim_{R\rightarrow\infty} \sum_{r=0}^{R-1}  \mathbb{E}^{\mathbb{Q}}_{Y_{\frac{Tr}{R}}} \Bigg[ \sum_{y\in\mathcal{S}^Y} \mathbb{Q}(Y_{\frac{T(r+1)}{R}}=y|Y_{\frac{Tr}{R}}) \log\frac{\mathbb{Q}(Y_{\frac{T(r+1)}{R}}=y|Y_{\frac{Tr}{R}})}{\tilde{\mathbb{P}}(Y_{\frac{T(r+1)}{R}}=y|Y_{\frac{Tr}{R}},\boldsymbol{\lambda})} \Bigg] .
\end{align*}
Note that within an infinitesimal time interval a jump in $Y$ may only happen in one direction within $\mathcal{T}$. With this in mind, we retrieve the limit of a Riemann sum in the interval $[0,T]$, i.e.
\begin{align*}
\mathbb{E}^{\mathbb{Q}}_{Y} \bigg[ \log  \frac{g_Y}{f_{Y|\boldsymbol{\lambda}}} \bigg] & = \lim_{R\rightarrow\infty} \frac{T}{R }\sum_{r=0}^{R-1}  \mathbb{E}^{\mathbb{Q}}_{Y_{\frac{Tr}{R}}} \Bigg[  \sum_{{\boldsymbol{\eta}}\in\mathcal{T}}  \nu^{\boldsymbol{\eta}}_{\frac{Tr}{R}}(Y^{\boldsymbol{\eta}}_{\frac{Tr}{R}}) \log \frac{ \nu^{\boldsymbol{\eta}}_{\frac{Tr}{R}}(Y^{\boldsymbol{\eta}}_{\frac{Tr}{R}})}{\Xi_{Y_{\frac{Tr}{R}},{\boldsymbol{\eta}}}}  +  \frac{R}{T} \log\frac{1 + \frac{T}{R}  \nu_{\frac{Tr}{R}}(Y_{\frac{Tr}{R}})}{1+ \frac{T}{R}   \Xi_{Y_{\frac{Tr}{R}}}}
 \Bigg] \\
 & = \int_0^T  \mathbb{E}^{\mathbb{Q}}_{Y_t} \bigg[  \sum_{{\boldsymbol{\eta}}\in\mathcal{T}}  \nu^{\boldsymbol{\eta}}_t (Y^{\boldsymbol{\eta}}_t) \log \frac{ \nu^{\boldsymbol{\eta}}_t(Y^{\boldsymbol{\eta}}_t)}{\Xi_{Y_t,{\boldsymbol{\eta}}}}  - \Xi_{Y_t} + \nu_t(Y_t) \bigg] dt ,
\end{align*}
and
\begin{align*}
\mathbb{E}^{\mathbb{Q}}_{Y,\boldsymbol{\lambda}} \bigg[ \log  \frac{d(Y,\boldsymbol{\lambda})_*\mathbb{Q}}{d(Y,\boldsymbol{\lambda})_*\tilde{\mathbb{P}}} \bigg] = \mathbb{E}^{\mathbb{Q}}_{\boldsymbol{\lambda}} \bigg[  \log\frac{ d\mathbb{Q}_{\boldsymbol{\lambda}} }{d\mathbb{P}_{\boldsymbol{\lambda}}} \bigg] + \int_0^T  \mathbb{E}^{\mathbb{Q}}_{Y_t,\boldsymbol{\lambda}} \bigg[  \sum_{{\boldsymbol{\eta}}\in\mathcal{T}}  \nu^{\boldsymbol{\eta}}_t (Y^{\boldsymbol{\eta}}_t) \log \frac{ \nu^{\boldsymbol{\eta}}_t(Y^{\boldsymbol{\eta}}_t)}{\tilde{\Xi}_{Y_t,{\boldsymbol{\eta}}}}  - \tilde{\Xi}_{Y_t} + \nu_t(Y_t) \bigg] dt
\end{align*}
completes the proof.
\end{proof}

Thus, the lower bound in \eqref{varboundDeveloped} depends on both the \textit{latent} variables $Y$ and $\boldsymbol{\lambda}$, accounting for the various counts and rates. On a basic level, this is built by $3$ distinguishable components, that is, the expected log-observations, the \textit{Kullback-Leibler} divergence across service rate densities, and a $\mathbb{Q}$-weighted divergence across hazard functions and rates, further integrated along the entire network trajectory.

\section{A functional representation} \label{FunctionalSec}

The above bound includes the prior rates density $d\mathbb{P}_{\boldsymbol{\lambda}}$ along with the $\tilde{\mathbb{P}}$-generator $\Xi$ for the approximating network system with negative loads. In addition, we can find the unknown $\mathbb{Q}$ distribution for the time-indexed random variables $Y_t$, along with undetermined hazard rates $\nu$ and densities for infinitesimal rates in $\boldsymbol{\lambda}$. Hence, by maximising this bound, we may derive properties on $\mathbb{Q}$ that allow for the construction of an approximating distribution to $d\mathbb{P}_{\boldsymbol{\lambda}|\boldsymbol{o}_1,\dots,\boldsymbol{o}_K}$ and the corresponding likelihood-ratio in \eqref{likRatio}. In this Section, we begin by generalizing work in \cite{opper2008variational} and present results that (i) accommodate parameter uncertainty in transition rates and (ii) impose computational restrictions in the resulting iterative system of equations. Later, we move on to inspect posterior rate densities and conjugacy properties as $\delta\rightarrow 0$.
 
\begin{prop}  \label{Prop1}
Let $d\mathbb{Q}_{\boldsymbol{\lambda}}$ be some valid joint density assigned to the instantaneous rates $\boldsymbol{\lambda}$ under the approximating mean-field measure $\mathbb{Q}$. Also, define $Y^{\backslash\boldsymbol{\eta}}_t = \{ Y^{\boldsymbol{\eta}'}_t  \, :\, \boldsymbol{\eta}'\in\mathcal{T} \backslash \{\boldsymbol{\eta}\}\}$. Then, the $\mathbb{Q}$-dynamics of $Y$ that optimize the lower bound  \eqref{varboundDeveloped} may be parametrized by a system of equations, so that the intensity functions $\nu_t^{\boldsymbol{\eta}}(y)\leq\bar{\nu}$ are given by  
\begin{align}
\nu_t^{\boldsymbol{\eta}}(y) =  \frac{r_t^{\boldsymbol{\eta}}(y+1)}{r_t^{\boldsymbol{\eta}}(y)} e^{\mathbb{E}^{\mathbb{Q}}_{Y^{\backslash\boldsymbol{\eta}}_t,\boldsymbol{\lambda}} \big[ \log\Xi_{Y_t,{\boldsymbol{\eta}}} | Y_t^{\boldsymbol{\eta}}=y \big] - k_t^{\boldsymbol{\eta}}(y)/\mathbb{Q}(Y_t^{\boldsymbol{\eta}}=y) }  \label{ratesEq}
\end{align}
for all $t\in[0,T]$, $\boldsymbol{\eta}\in\mathcal{T}$ and $y\in\mathbb{N}_0$, with $\kappa_t^{\boldsymbol{\eta}}(y)\geq 0$ and
\begin{align}
\frac{dr_t^{\boldsymbol{\eta}}(y)}{dt} = r_t^{\boldsymbol{\eta}}(y)  \mathbb{E}^{\mathbb{Q}}_{Y^{\backslash\boldsymbol{\eta}}_t,\boldsymbol{\lambda}} \big[ \Xi_{Y_t,{\boldsymbol{\eta}}} | Y_t^{\boldsymbol{\eta}}=y \big] - \bigg(1+\frac{k_t^{\boldsymbol{\eta}}(y)}{\mathbb{Q}(Y_t^{\boldsymbol{\eta}}=y)}\bigg)  r_t^{\boldsymbol{\eta}}(y+1) \frac{e^{\mathbb{E}^{\mathbb{Q}}_{Y^{\backslash\boldsymbol{\eta}}_t,\boldsymbol{\lambda}} \big[ \log\Xi_{Y_t,{\boldsymbol{\eta}}} | Y_t^{\boldsymbol{\eta}}=y \big]}}{e^{ k_t^{\boldsymbol{\eta}}(y)/\mathbb{Q}(Y_t^{\boldsymbol{\eta}}=y)}} \label{diffR}
\end{align}
whenever $t\neq t_k$, $k=1,\dots,K$, and 
\begin{align}
\lim_{t\rightarrow t^-_k} r_t^{\boldsymbol{\eta}}(y) = r_{t_k}^{\boldsymbol{\eta}}(y) \exp ( \mathbb{E}^{\mathbb{Q}}_{Y^{\backslash\boldsymbol{\eta}}_{t_k}} \big[ \log  f_{O|Y_{t_k}}(\boldsymbol{o}_k) | Y_{t_k}^{\boldsymbol{\eta}}=y \big] ) \label{jumpR}
\end{align} 
at network observation times. In addition, $\kappa_t^{\boldsymbol{\eta}}(y)(\nu_t^{\boldsymbol{\eta}}(y)-\bar{\nu})=0$.
\end{prop}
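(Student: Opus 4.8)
The plan is to treat the right-hand side of \eqref{varboundDeveloped} as a functional of the $\mathbb{Q}$-law of $Y$ and maximise it by constrained calculus of variations. Writing $q^{\boldsymbol{\eta}}_t(y):=\mathbb{Q}(Y^{\boldsymbol{\eta}}_t=y)$, the free variables are the intensities $\nu^{\boldsymbol{\eta}}_t(y)$ and the marginals $q^{\boldsymbol{\eta}}_t(y)$, subject to: (i) the mean-field factorisation $\mathbb{Q}(Y_t=y)=\prod_{\boldsymbol{\eta}}q^{\boldsymbol{\eta}}_t(y_{\boldsymbol{\eta}})$; (ii) the forward/master equation of the associated pure-birth process, $\dot{q}^{\boldsymbol{\eta}}_t(y)=\nu^{\boldsymbol{\eta}}_t(y-1)q^{\boldsymbol{\eta}}_t(y-1)-\nu^{\boldsymbol{\eta}}_t(y)q^{\boldsymbol{\eta}}_t(y)$; (iii) normalisation $\sum_y q^{\boldsymbol{\eta}}_t(y)=1$; and (iv) the cap $\nu^{\boldsymbol{\eta}}_t(y)\le\bar{\nu}$ (positivity of $\nu$ and of $q$ being automatic). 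Constraint (ii) is the essential one, as without it the bound would be maximised by the trivial choice $\nu=\Xi$.

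First I would substitute (i) into \eqref{varboundDeveloped}. By the tower property and the $\mathbb{Q}$-independence of $\boldsymbol{\lambda}$ and $Y$, and using $-\Xi_{Y_t}=\sum_{\boldsymbol{\eta}}\Xi_{Y_t,\boldsymbol{\eta}}$ together with $\nu_t(Y_t)=-\sum_{\boldsymbol{\eta}}\nu^{\boldsymbol{\eta}}_t(Y^{\boldsymbol{\eta}}_t)$, every term collapses into sums over $\boldsymbol{\eta}$ and $y$ of $q^{\boldsymbol{\eta}}_t(y)$ against the mean-field averages $\overline{\log\Xi}^{\boldsymbol{\eta}}_t(y):=\mathbb{E}^{\mathbb{Q}}_{Y^{\backslash\boldsymbol{\eta}}_t,\boldsymbol{\lambda}}[\log\Xi_{Y_t,\boldsymbol{\eta}}\mid Y^{\boldsymbol{\eta}}_t=y]$ and $\overline{\Xi}^{\boldsymbol{\eta}}_t(y):=\mathbb{E}^{\mathbb{Q}}_{Y^{\backslash\boldsymbol{\eta}}_t,\boldsymbol{\lambda}}[\Xi_{Y_t,\boldsymbol{\eta}}\mid Y^{\boldsymbol{\eta}}_t=y]$, while the observation term becomes $\sum_k\sum_{\boldsymbol{\eta}}\sum_y q^{\boldsymbol{\eta}}_{t_k}(y)\,\mathbb{E}^{\mathbb{Q}}_{Y^{\backslash\boldsymbol{\eta}}_{t_k}}[\log f_{O|Y_{t_k}}(\boldsymbol{o}_k)\mid Y^{\boldsymbol{\eta}}_{t_k}=y]$. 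I would then form the Lagrangian, adjoining (ii) with a time-dependent costate $\psi^{\boldsymbol{\eta}}_t(y)$, (iii) with a multiplier $\ell^{\boldsymbol{\eta}}_t$, and (iv) with a KKT multiplier $\kappa^{\boldsymbol{\eta}}_t(y)\ge0$, and integrate the $\psi\dot{q}$ term by parts so that the derivative falls on $\psi$.

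Imposing stationarity then gives the three equations in the statement. The variation in $\nu^{\boldsymbol{\eta}}_t(y)$, solved for $\nu$, reads $\log\nu^{\boldsymbol{\eta}}_t(y)=\overline{\log\Xi}^{\boldsymbol{\eta}}_t(y)+\psi^{\boldsymbol{\eta}}_t(y+1)-\psi^{\boldsymbol{\eta}}_t(y)-\kappa^{\boldsymbol{\eta}}_t(y)/q^{\boldsymbol{\eta}}_t(y)$, which is exactly \eqref{ratesEq} once we set $r^{\boldsymbol{\eta}}_t(y):=\exp\psi^{\boldsymbol{\eta}}_t(y)$ (the correction is well defined because $q^{\boldsymbol{\eta}}_t(y)>0$, $\mathbb{Q}$ and $\tilde{\mathbb{P}}$ being equivalent on $\mathcal{F}$). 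The variation in $q^{\boldsymbol{\eta}}_t(y)$ at a time $t\ne t_k$ yields a first-order ODE for $\psi^{\boldsymbol{\eta}}_t(y)$; substituting the $\nu$-equation back to cancel the entropy terms and rewriting via $\dot{r}/r=\dot{\psi}$ produces \eqref{diffR}, the additive normalisation multiplier $\ell^{\boldsymbol{\eta}}_t$ contributing only a $y$-independent rescaling of $r^{\boldsymbol{\eta}}_t(\cdot)$ that cancels in the ratio in \eqref{ratesEq} and may be discarded. At an observation time $t_k$ the observation term contributes a Dirac mass to the Euler--Lagrange equation for $q^{\boldsymbol{\eta}}_t(y)$, forcing $\psi^{\boldsymbol{\eta}}_t(y)$, and hence $r^{\boldsymbol{\eta}}_t(y)$, to jump multiplicatively by $\exp(\mathbb{E}^{\mathbb{Q}}_{Y^{\backslash\boldsymbol{\eta}}_{t_k}}[\log f_{O|Y_{t_k}}(\boldsymbol{o}_k)\mid Y^{\boldsymbol{\eta}}_{t_k}=y])$, i.e. \eqref{jumpR}. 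Finally, $\kappa^{\boldsymbol{\eta}}_t(y)\ge0$ and the complementary slackness $\kappa^{\boldsymbol{\eta}}_t(y)(\nu^{\boldsymbol{\eta}}_t(y)-\bar{\nu})=0$ are the KKT conditions for (iv), obtained from a one-sided perturbation of $\nu$ at the points where it saturates $\bar{\nu}$.

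The main obstacle is the mean-field self-consistency. Because $\overline{\log\Xi}^{\boldsymbol{\eta}}_t$, $\overline{\Xi}^{\boldsymbol{\eta}}_t$ and the conditional observation averages all depend on the complementary marginals $q^{\boldsymbol{\eta}'}$, $\boldsymbol{\eta}'\ne\boldsymbol{\eta}$, the functional is not separable, so the stationarity conditions should be stated --- following \cite{opper2008variational} --- as a coupled fixed-point system arising from coordinate-wise variation (holding the other marginals and the fixed density $d\mathbb{Q}_{\boldsymbol{\lambda}}$ frozen while varying $q^{\boldsymbol{\eta}}$ and $\nu^{\boldsymbol{\eta}}$); this is precisely the ``system of equations'' claimed, and is all the later algorithm requires. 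The remaining technicalities --- justifying the integration by parts and the distributional Euler--Lagrange equation with jumps at the finitely many $t_k$, and the term-by-term differentiability of the Riemann-sum representation of the KL integral established in Lemma \ref{lowerBoundLemma} --- are routine.
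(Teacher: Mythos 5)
Your proposal is correct and follows essentially the same route as the paper: a Lagrangian/KKT stationarity argument over the intensities and marginals, adjoining the master equation with a time-dependent costate, capping the rates with a complementary-slackness multiplier, and exponentiating the costate (your $r=e^{\psi}$ versus the paper's $l=-\log r$) to obtain \eqref{ratesEq}--\eqref{jumpR}. The only cosmetic differences are your explicit normalisation multiplier (which, as you note, cancels in the ratio) and your explicit remark on the coordinate-wise, fixed-point nature of the mean-field stationarity conditions, which the paper leaves implicit.
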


\begin{proof}
We identify a stationary point to the Lagrangian associated with this constrained optimization problem, where optimization is w.r.t. the jump rates and the finite dimensional distributions of $Y$, subject to $\nu_t^{\boldsymbol{\eta}}(y)\leq \bar{\nu}$ and the master equation
\begin{align}
\frac{d\mathbb{Q}(Y_t^{\boldsymbol{\eta}}=y)}{dt} =\nu_t^{\boldsymbol{\eta}}(y-1)\cdot\mathbb{Q}(Y_t^{\boldsymbol{\eta}}=y-1) - \nu_t^{\boldsymbol{\eta}}(y)\cdot \mathbb{Q}(Y_t^{\boldsymbol{\eta}}=y) \label{masterEq}
\end{align}
for $y\geq 1$, with $d\mathbb{Q}(Y_t^{\boldsymbol{\eta}}=0)= - \nu_t^{\boldsymbol{\eta}}(0) \mathbb{Q}(Y_t^{\boldsymbol{\eta}}=0)dt$. Denote by $\phi_t^{\boldsymbol{\eta}}(y) = \mathbb{Q}(Y_t^{\boldsymbol{\eta}}=y)$ the functional representing the marginal $\mathbb{Q}$-probability of the state $Y_t$ in the direction of $\boldsymbol{\eta}$, for all $y\in\mathbb{N}_0$. In view of \eqref{varboundDeveloped}, the object function may be expressed as the functional
\begin{align*}
\Phi[\phi,\nu,l]&= C + \sum_{k=1}^K \mathbb{E}^{\mathbb{Q}}_{Y_{t_k}} \big[ \log  f_{O|Y_{t_k}}(\boldsymbol{o}_k) \big] -  \int_0^T  \sum_{{\boldsymbol{\eta}}\in\mathcal{T}}  \mathbb{E}^{\mathbb{Q}}_{Y^{\boldsymbol{\eta}}_t} \big[ \Psi[Y^{\boldsymbol{\eta}}_t,\phi_t^{\boldsymbol{\eta}}(Y_t^{\boldsymbol{\eta}}) , \nu_t^{\boldsymbol{\eta}}(Y_t^{\boldsymbol{\eta}}),l_t^{\boldsymbol{\eta}}(Y_t^{\boldsymbol{\eta}})] \big] dt 
\end{align*}
with
\begin{align*}
\Psi[Y^{\boldsymbol{\eta}}_t,\phi_t^{\boldsymbol{\eta}}(Y_t^{\boldsymbol{\eta}}) , \nu_t^{\boldsymbol{\eta}}(Y_t^{\boldsymbol{\eta}}),l_t^{\boldsymbol{\eta}}(Y_t^{\boldsymbol{\eta}})] &= \nu_t^{\boldsymbol{\eta}}(Y_t^{\boldsymbol{\eta}}) \Big( \log  \nu_t^{\boldsymbol{\eta}}(Y_t^{\boldsymbol{\eta}}) -  \mathbb{E}^{\mathbb{Q}}_{Y^{\backslash\boldsymbol{\eta}}_t,\boldsymbol{\lambda}} \big[ \log \Xi_{Y_t,{\boldsymbol{\eta}}} |  Y^{\boldsymbol{\eta}}_t  \big]  - 1\Big) \\
& + \mathbb{E}^{\mathbb{Q}}_{Y^{\backslash\boldsymbol{\eta}}_t,\boldsymbol{\lambda}} \big[ \Xi_{Y_t,{\boldsymbol{\eta}}}  |  Y^{\boldsymbol{\eta}}_t \big]  - l_t^{\boldsymbol{\eta}}(Y_t^{\boldsymbol{\eta}}) \Big(\nu_t^{\boldsymbol{\eta}}(Y_t^{\boldsymbol{\eta}}) + \frac{d\log \phi_t^{\boldsymbol{\eta}}(Y_t^{\boldsymbol{\eta}})}{dt} \Big) \\
& + l_t^{\boldsymbol{\eta}}(Y_t^{\boldsymbol{\eta}}+1)\nu_t^{\boldsymbol{\eta}}(Y_t^{\boldsymbol{\eta}}) - \frac{k_t^{\boldsymbol{\eta}}(Y_t^{\boldsymbol{\eta}})}{\phi_t^{\boldsymbol{\eta}}(Y_t^{\boldsymbol{\eta}})} \big(\bar{\nu}-\nu_t^{\boldsymbol{\eta}}(Y_t^{\boldsymbol{\eta}})\big)   ,
\end{align*}
where $l^{\boldsymbol{\eta}} = (l^{\boldsymbol{\eta}}_t(\cdot))_{t\geq 0}$ and $k^{\boldsymbol{\eta}} = (k^{\boldsymbol{\eta}}_t(\cdot))_{t\geq 0}$ are multiplier functions that ensure \eqref{masterEq} and the complementary inequality on rates are satisfied. Above, the term $C$ includes the remainder bits in the lower bound in \eqref{varboundDeveloped} that are independent of the finite dimensional distributions of $Y$ under $\mathbb{Q}$. Hence, we obtain the following functional derivatives
\begin{align*}
\fdv{\Phi}{\phi_t^{\boldsymbol{\eta}}(y)} &= \sum_{k=1}^K \delta(t-t_k) \,
\mathbb{E}^{\mathbb{Q}}_{Y^{\backslash\boldsymbol{\eta}}_t} \big[ \log  f_{O|Y_t}(\boldsymbol{o}_k) | Y_t^{\boldsymbol{\eta}}=y \big]  - \mathbb{E}^{\mathbb{Q}}_{Y^{\backslash\boldsymbol{\eta}}_t,\boldsymbol{\lambda}} \big[ \Xi_{Y_t,{\boldsymbol{\eta}}} | Y_t^{\boldsymbol{\eta}}=y \big] - \frac{d l_t^{\boldsymbol{\eta}}(y)}{dt}  \\
& - \nu_t^{\boldsymbol{\eta}}(y) \Big( \log  \nu_t^{\boldsymbol{\eta}}(y) -  \mathbb{E}^{\mathbb{Q}}_{Y^{\backslash\boldsymbol{\eta}}_t,\boldsymbol{\lambda}} \big[ \log \Xi_{Y_t,{\boldsymbol{\eta}}} |  Y^{\boldsymbol{\eta}}_t = y \big]  - 1 -  l_t^{\boldsymbol{\eta}}(y) +  l_t^{\boldsymbol{\eta}}(y+1) \Big) \ ,
\end{align*}
and
\begin{align*}
\fdv{\Phi}{\nu_t^{\boldsymbol{\eta}}(y)} &= - \phi_t^{\boldsymbol{\eta}}(y)
 \big( \log  \nu_t^{\boldsymbol{\eta}}(y)  -  \mathbb{E}^{\mathbb{Q}}_{Y^{\backslash\boldsymbol{\eta}}_t,\boldsymbol{\lambda}} \big[ \log \Xi_{Y_t,{\boldsymbol{\eta}}} | Y_t^{\boldsymbol{\eta}}=y \big] + l_t^{\boldsymbol{\eta}}(y+1) - l_t^{\boldsymbol{\eta}}(y) \big) - k_t^{\boldsymbol{\eta}}(y) \ ,
\end{align*}
for all $t\in[0,T]$, $\boldsymbol{\eta}\in\mathcal{T}$ and $y\in\mathbb{N}_0$, to be complemented by the slackness conditions $\kappa_t^{\boldsymbol{\eta}}(y)(\nu_t^{\boldsymbol{\eta}}(y)-\bar{\nu})=0$, $\kappa_t^{\boldsymbol{\eta}}(y)\geq 0$ and $\nu_t^{\boldsymbol{\eta}}(y)\leq\bar{\nu}$. By letting $l_t^{\boldsymbol{\eta}}(y)=-\log r_t^{\boldsymbol{\eta}}(y)$ and setting the above expressions to $0$, we obtain
\begin{align*}
\frac{dr_t^{\boldsymbol{\eta}}(y)}{dt} &= r_t^{\boldsymbol{\eta}}(y) \cdot \bigg( \mathbb{E}^{\mathbb{Q}}_{Y^{\backslash\boldsymbol{\eta}}_t,\boldsymbol{\lambda}} \big[ \Xi_{Y_t,{\boldsymbol{\eta}}} | Y_t^{\boldsymbol{\eta}}=y \big] -  \sum_{k=1}^K \delta(t-t_k) \,
\mathbb{E}^{\mathbb{Q}}_{Y^{\backslash\boldsymbol{\eta}}_t} \big[ \log  f_{O|Y_t}(\boldsymbol{o}_k) | Y_t^{\boldsymbol{\eta}}=y \big] \bigg) \\
& - \bigg(1+\frac{k_t^{\boldsymbol{\eta}}(y)}{\phi_t^{\boldsymbol{\eta}}(y)}\bigg) \cdot r_t^{\boldsymbol{\eta}}(y+1) \cdot e^{\mathbb{E}^{\mathbb{Q}}_{Y^{\backslash\boldsymbol{\eta}}_t,\boldsymbol{\lambda}} \big[ \log\Xi_{Y_t,{\boldsymbol{\eta}}} | Y_t^{\boldsymbol{\eta}}=y \big] - k_t^{\boldsymbol{\eta}}(y)/\phi_t^{\boldsymbol{\eta}}(y) }
\end{align*}
and
\begin{align*}
\nu_t^{\boldsymbol{\eta}}(y) =  \frac{r_t^{\boldsymbol{\eta}}(y+1)}{r_t^{\boldsymbol{\eta}}(y)} \cdot e^{\mathbb{E}^{\mathbb{Q}}_{Y^{\backslash\boldsymbol{\eta}}_t,\boldsymbol{\lambda}} \big[ \log\Xi_{Y_t,{\boldsymbol{\eta}}} | Y_t^{\boldsymbol{\eta}}=y \big] - k_t^{\boldsymbol{\eta}}(y)/\phi_t^{\boldsymbol{\eta}}(y) } \ . 
\end{align*}
Observe above that, for fixed values of $\phi$ and $r$, if $$\frac{r_t^{\boldsymbol{\eta}}(y+1)}{r_t^{\boldsymbol{\eta}}(y)} \cdot e^{\mathbb{E}^{\mathbb{Q}}_{Y^{\backslash\boldsymbol{\eta}}_t,\boldsymbol{\lambda}} \big[ \log\Xi_{Y_t,{\boldsymbol{\eta}}} | Y_t^{\boldsymbol{\eta}}=y \big]}<\bar{\nu}$$
then the complementary slackness conditions imply $k_t^{\boldsymbol{\eta}}(y)=0$; otherwise, $\nu_t^{\boldsymbol{\eta}}(y)=\bar{\nu}$ and 
$$k_t^{\boldsymbol{\eta}}(y) = \phi_t^{\boldsymbol{\eta}}(y) \cdot \big[ \mathbb{E}^{\mathbb{Q}}_{Y^{\backslash\boldsymbol{\eta}}_t,\boldsymbol{\lambda}} \big[ \log\Xi_{Y_t,{\boldsymbol{\eta}}} | Y_t^{\boldsymbol{\eta}}=y \big] - \log\frac{\bar{\nu} \cdot r_t^{\boldsymbol{\eta}}(y)}{r_t^{\boldsymbol{\eta}}(y+1)} \big] \geq 0,$$
yielding a valid system of equations, leading to \eqref{ratesEq}-\eqref{jumpR} and concluding the proof.
\end{proof}

\begin{cor}[Distributed network monitoring] \label{Cor1} Assume that network observations are distributed and independent across the stations, so that
\begin{align*}
f_{O|Y_{t_k}}(\boldsymbol{o}_k) = \prod_{i=1}^M f_{O|\{ Y^{\boldsymbol{\eta}}_{t_k}  \, :\, \boldsymbol{\eta}\in\mathcal{T}_i  \}}(\boldsymbol{o}_k^i)
\end{align*} 
for some conditional mass function $f_{O|\{ Y^{\boldsymbol{\eta}}_{t_k}  \, :\, \boldsymbol{\eta}\in\mathcal{T}_i  \}}$, where $\mathcal{T}_i=(\cup_{c\in\mathcal{C}}\mathcal{T}_{i,c}^{\leftarrow}) \, \cup  \, (\cup_{c\in\mathcal{C}}\mathcal{T}_{i,c}^{\rightarrow})$ is the set of job transitions relevant to network activity in station $i>0$, and $\boldsymbol{o}^i_k$ denotes the time $t_k$ observations across classes in the station. Further assume that $\bar{\nu}=\infty$, so that there exists no bound on intensity rates $\nu_t^{\boldsymbol{\eta}}(y)$ under $\mathbb{Q}$. Then, the system of equations in Proposition \ref{Prop1} reduces to
\begin{align*}
\nu_t^{\boldsymbol{\eta}}(y) =  \frac{r_t^{\boldsymbol{\eta}}(y+1)}{r_t^{\boldsymbol{\eta}}(y)} e^{\mathbb{E}^{\mathbb{Q}}_{Y^{\backslash\boldsymbol{\eta}}_t,\boldsymbol{\lambda}} \big[ \log\Xi_{Y_t,{\boldsymbol{\eta}}} | Y_t^{\boldsymbol{\eta}}=y \big] } 
\end{align*}
for all $t\in[0,T]$, $\boldsymbol{\eta}\in\mathcal{T}$ and $y\in\mathbb{N}_0$, with
\begin{align*}
\frac{dr_t^{\boldsymbol{\eta}}(y)}{dt} = r_t^{\boldsymbol{\eta}}(y)  \mathbb{E}^{\mathbb{Q}}_{Y^{\backslash\boldsymbol{\eta}}_t,\boldsymbol{\lambda}} \big[ \Xi_{Y_t,{\boldsymbol{\eta}}} | Y_t^{\boldsymbol{\eta}}=y \big] - r_t^{\boldsymbol{\eta}}(y+1) e^{\mathbb{E}^{\mathbb{Q}}_{Y^{\backslash\boldsymbol{\eta}}_t,\boldsymbol{\lambda}} \big[ \log\Xi_{Y_t,{\boldsymbol{\eta}}} | Y_t^{\boldsymbol{\eta}}=y \big]} 
\end{align*}
whenever $t\neq t_k$, $k=1,\dots,K$, and 
\begin{align*}
\lim_{t\rightarrow t^-_k} r_t^{\boldsymbol{\eta}}(y) = r_{t_k}^{\boldsymbol{\eta}}(y) e^{ \mathbb{E}^{\mathbb{Q}}_{Y^{\backslash\boldsymbol{\eta}}_t} \big[ \log  f_{O|\{ Y^{\boldsymbol{\eta}'}_{t}  \, :\, \boldsymbol{\eta}'\in\mathcal{T}_{\eta_1}  \}}(\boldsymbol{o}_k^{\eta_1}) | Y_t^{\boldsymbol{\eta}}=y \big]} e^{\mathbb{E}^{\mathbb{Q}}_{Y^{\backslash\boldsymbol{\eta}}_t} \big[ \log  f_{O|\{ Y^{\boldsymbol{\eta}'}_{t}  \, :\, \boldsymbol{\eta}'\in\mathcal{T}_{\eta_2}  \}}(\boldsymbol{o}_k^{\eta_2}) | Y_t^{\boldsymbol{\eta}}=y \big]} 
\end{align*} 
accounting for observations at origin and departure nodes in $\boldsymbol{\eta}\in\mathcal{T}$.
\end{cor}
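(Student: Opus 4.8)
The plan is to \emph{specialize}, not re-derive: I would start from the system \eqref{ratesEq}--\eqref{jumpR} of Proposition \ref{Prop1} and simply impose the two extra hypotheses, so the corollary follows by substitution and bookkeeping. First I dispose of $\bar{\nu}=\infty$. The constraint $\nu_t^{\boldsymbol{\eta}}(y)\le\bar{\nu}$ is then vacuous, so no multiplier is attached to it in the Lagrangian of Proposition \ref{Prop1}; equivalently, as observed in that proof, since $\tfrac{r_t^{\boldsymbol{\eta}}(y+1)}{r_t^{\boldsymbol{\eta}}(y)}\,e^{\mathbb{E}^{\mathbb{Q}}_{Y^{\backslash\boldsymbol{\eta}}_t,\boldsymbol{\lambda}}[\log\Xi_{Y_t,{\boldsymbol{\eta}}}\mid Y_t^{\boldsymbol{\eta}}=y]}$ is finite (the entries of $\Xi$ are finite and $r>0$), it is strictly below $\bar{\nu}=\infty$ and the complementary-slackness conditions force $\kappa_t^{\boldsymbol{\eta}}(y)\equiv 0$. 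Setting $k_t^{\boldsymbol{\eta}}\equiv 0$ in \eqref{ratesEq}--\eqref{diffR} collapses them at once to the two displayed equations of the corollary.

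Next I treat the reset \eqref{jumpR} with the factorized likelihood. Taking logarithms of $f_{O|Y_{t_k}}(\boldsymbol{o}_k)=\prod_{i=1}^M f_{O|\{Y^{\boldsymbol{\eta}'}_{t_k}:\boldsymbol{\eta}'\in\mathcal{T}_i\}}(\boldsymbol{o}_k^i)$ and using linearity of conditional expectation gives
\[
\mathbb{E}^{\mathbb{Q}}_{Y^{\backslash\boldsymbol{\eta}}_{t_k}}\big[\log f_{O|Y_{t_k}}(\boldsymbol{o}_k)\mid Y^{\boldsymbol{\eta}}_{t_k}=y\big]=\sum_{i=1}^M \mathbb{E}^{\mathbb{Q}}_{Y^{\backslash\boldsymbol{\eta}}_{t_k}}\big[\log f_{O|\{Y^{\boldsymbol{\eta}'}_{t_k}:\boldsymbol{\eta}'\in\mathcal{T}_i\}}(\boldsymbol{o}_k^i)\mid Y^{\boldsymbol{\eta}}_{t_k}=y\big].
\]
The key observation is that, for $\boldsymbol{\eta}=(\eta_1,\eta_2,\eta_3)$, one has $\boldsymbol{\eta}\in\mathcal{T}_i$ if and only if $i\in\{\eta_1,\eta_2\}$: indeed $\mathcal{T}_i=\{\boldsymbol{\eta}'\in\mathcal{T}:\eta'_1=i\text{ or }\eta'_2=i\}$, so $\boldsymbol{\eta}$ lies in $\mathcal{T}_{\eta_1}$ and in $\mathcal{T}_{\eta_2}$ but in no other $\mathcal{T}_i$. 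Hence, for every $i\notin\{\eta_1,\eta_2\}$, under the product law $\mathbb{Q}(Y_{t_k}=y)=\prod_{\boldsymbol{\eta}'}\mathbb{Q}(Y^{\boldsymbol{\eta}'}_{t_k}=y_{\boldsymbol{\eta}'})$ the coordinate $Y^{\boldsymbol{\eta}}_{t_k}$ is independent of $\{Y^{\boldsymbol{\eta}'}_{t_k}:\boldsymbol{\eta}'\in\mathcal{T}_i\}$, the conditioning on $Y^{\boldsymbol{\eta}}_{t_k}=y$ is inert, and that term equals a constant $c_k^{i}$ free of $y$. Only the terms $i=\eta_1$ and $i=\eta_2$ retain a $y$-dependence, and these are exactly the two exponents written in the corollary (with the understanding that a virtual or delay index $\eta_1=0$ or $\eta_2=0$ carries no observation, so the corresponding factor is simply absent).

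It then remains to discard the leftover $y$-independent constant $C_k^{\boldsymbol{\eta}}:=\sum_{i\notin\{\eta_1,\eta_2\}}c_k^{i}$. At $t_k$ it multiplies $r_{t}^{\boldsymbol{\eta}}(\cdot)$ by the $y$-uniform factor $e^{C_k^{\boldsymbol{\eta}}}$; since the reduced ODE \eqref{diffR} (with $k\equiv0$) is linear and homogeneous of degree one in the vector $(r_t^{\boldsymbol{\eta}}(y))_{y\in\mathbb{N}_0}$ — its coefficients $\mathbb{E}^{\mathbb{Q}}_{Y^{\backslash\boldsymbol{\eta}}_t,\boldsymbol{\lambda}}[\Xi_{Y_t,\boldsymbol{\eta}}\mid Y_t^{\boldsymbol{\eta}}=y]$ and $e^{\mathbb{E}^{\mathbb{Q}}_{Y^{\backslash\boldsymbol{\eta}}_t,\boldsymbol{\lambda}}[\log\Xi_{Y_t,\boldsymbol{\eta}}\mid Y_t^{\boldsymbol{\eta}}=y]}$ involve only the other directions and the $\boldsymbol{\lambda}$-density — this factor propagates as a global scaling of $r^{\boldsymbol{\eta}}$, and $\nu_t^{\boldsymbol{\eta}}(y)$ depends on $r^{\boldsymbol{\eta}}$ only through the ratio $r_t^{\boldsymbol{\eta}}(y+1)/r_t^{\boldsymbol{\eta}}(y)$, in which it cancels. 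Therefore $\nu^{\boldsymbol{\eta}}$, hence (through the master equation \eqref{masterEq}) the marginals $\phi^{\boldsymbol{\eta}}$, hence the coupling terms feeding every other direction, are all unaffected; absorbing $e^{C_k^{\boldsymbol{\eta}}}$ into the normalization of $r^{\boldsymbol{\eta}}$ gives precisely the reset stated in the corollary.

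I expect the only delicate step to be this last bookkeeping: correctly isolating which observation factors survive the conditioning (the origin--destination pair of $\boldsymbol{\eta}$) and then justifying that the residual multiplicative constant is genuinely immaterial, which needs both the product form of $Y_{t_k}$ under $\mathbb{Q}$ and the scale-invariance of \eqref{ratesEq}--\eqref{diffR} under a uniform rescaling of $r^{\boldsymbol{\eta}}$. The $\bar{\nu}=\infty$ reduction, by contrast, is immediate from the slackness discussion already carried out in the proof of Proposition \ref{Prop1}.
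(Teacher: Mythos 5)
Your proposal is correct and follows the same route the paper intends: the corollary is stated without a separate proof precisely because it is a direct specialization of Proposition \ref{Prop1}, with $\bar{\nu}=\infty$ forcing $k_t^{\boldsymbol{\eta}}\equiv 0$ and the factorized likelihood splitting the reset exponent across stations, of which only $i\in\{\eta_1,\eta_2\}$ retain $y$-dependence. Your additional justification that the residual $y$-independent factor is immaterial (via the product form of $\mathbb{Q}$ and the fact that $\nu_t^{\boldsymbol{\eta}}$ depends on $r^{\boldsymbol{\eta}}$ only through ratios) is a correct piece of bookkeeping the paper leaves implicit.
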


Here, we have obtained a system of equations with iterated dependencies given a distribution $\mathbb{Q}_{\boldsymbol{\lambda}}$. The hazard rate in each counting process depends on the network state probability across the indexed times; complementarily, these state probabilities may be updated independently by means of the master equation \eqref{masterEq}. In Corollary \ref{Cor1}, we further notice that by simplifying the network observation model, and easing restrictions on rates under the approximating measure $\mathbb{Q}$ we retrieve an analogue result to that previously presented in \cite{opper2008variational,cohn2010mean}. However, this is reportedly problematic and can cause a computational bottleneck when reconstructing the jump rates $\nu_t^{\boldsymbol{\eta}}(y)$, as these may approach infinity at observation times. Next, we derive the main result on the infinitesimal transition rates.

\begin{prop} \label{Prop2}
Let densities for the infinitesimal rates $\boldsymbol{\lambda}$ be defined w.r.t to a (Lebesgue) product base measure $\mu_{\boldsymbol{\lambda}}$, so that $d\mathbb{Q}_{\boldsymbol{\lambda}}=g_{\boldsymbol{\lambda}} d\mu_{\boldsymbol{\lambda}}$  with $ g_{\boldsymbol{\lambda}} = \prod_{\boldsymbol{\eta}\in\mathcal{T}} g^{\boldsymbol{\eta}}_\lambda$ and marginal densities $g^{\boldsymbol{\eta}}_{\lambda} = d\mathbb{Q}_{\lambda,\boldsymbol{\eta}}/d\mu_{\lambda}$. Also, let $\nu_t^{\boldsymbol{\eta}}(y)$, $\boldsymbol{\eta}\in\mathcal{T}$, be some (independent) intensity functions assigned to $Y$ under the approximating mean-field measure $\mathbb{Q}$. Finally, define $\boldsymbol{\lambda}_{\backslash\boldsymbol{\eta}}=\{\lambda_{\boldsymbol{\eta}'} \, :\, \boldsymbol{\eta}'\in\mathcal{T} \backslash \{\boldsymbol{\eta}\}  \}$ and recall definitions for network station loads $\Upsilon$ in \eqref{Upsil1} and \eqref{Upsil2}. Then, as $\delta \rightarrow 0$ in \eqref{deltaGenerator}, the distribution $\mathbb{Q}_{\boldsymbol{\lambda}}$ that optimizes the lower bound  \eqref{varboundDeveloped} is such that
\begin{align*}
g_{\lambda}^{\boldsymbol{\eta}}(z) \propto e^{\mathbb{E}^{\mathbb{Q}}_{\boldsymbol{\lambda}_{\backslash\boldsymbol{\eta}}} [   \log f_{\boldsymbol{\lambda}}(\boldsymbol{z}) ] -z \cdot \int_0^T  \mathbb{E}^{\mathbb{Q}}_{Y_t} [ \Upsilon(Y_t,\eta_1,\eta_3)\vee 0  ] dt} \cdot z^{\int_0^T  \mathbb{E}^{\mathbb{Q}}_{Y^{\boldsymbol{\eta}}_t} [   \nu^{\boldsymbol{\eta}}_t (Y^{\boldsymbol{\eta}}_t) ] dt}
\end{align*}
up to a normalizing constant, for $z\in\mathbb{R}_+$ and every $\boldsymbol{\eta}\in\mathcal{T}$.
\end{prop}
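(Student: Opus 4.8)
The plan is to identify a stationary point of the same Lagrangian functional used in the proof of Proposition~\ref{Prop1}, but now taking functional derivatives with respect to the marginal rate densities $g_\lambda^{\boldsymbol{\eta}}$ rather than with respect to the $\mathbb{Q}$-law of $Y$. First I would collect, from the lower bound in \eqref{varboundDeveloped}, exactly those terms that depend on $g_\lambda^{\boldsymbol{\eta}}$: namely the prior cross-entropy $-\mathbb{E}^{\mathbb{Q}}_{\boldsymbol{\lambda}}[\log(d\mathbb{Q}_{\boldsymbol{\lambda}}/d\mathbb{P}_{\boldsymbol{\lambda}})]$, which splits (using independence under $\mathbb{Q}$ and $g_{\boldsymbol{\lambda}}=\prod_{\boldsymbol{\eta}}g_\lambda^{\boldsymbol{\eta}}$) into $\mathbb{E}^{\mathbb{Q}}_{\boldsymbol{\lambda}}[\log f_{\boldsymbol{\lambda}}] - \sum_{\boldsymbol{\eta}}\mathbb{E}^{\mathbb{Q}}_{\lambda,\boldsymbol{\eta}}[\log g_\lambda^{\boldsymbol{\eta}}]$, plus the trajectory-integral term $-\int_0^T \mathbb{E}^{\mathbb{Q}}_{Y_t,\boldsymbol{\lambda}}[\sum_{\boldsymbol{\eta}}(\nu_t^{\boldsymbol{\eta}}(Y_t^{\boldsymbol{\eta}})\log(\nu_t^{\boldsymbol{\eta}}(Y_t^{\boldsymbol{\eta}})/\Xi_{Y_t,\boldsymbol{\eta}}) - \Xi_{Y_t} + \nu_t(Y_t))]\,dt$. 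The expected-log-observation term does not involve $\boldsymbol{\lambda}$ and drops out.

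Next I would substitute the explicit form of the generator $\Xi$ from \eqref{Upsil1}--\eqref{Upsil2}: in the limit $\delta\to0$ we have $\Xi_{Y_t,\boldsymbol{\eta}} = \lambda_{\boldsymbol{\eta}}\cdot[\Upsilon(Y_t,\eta_1,\eta_3)\vee 0]$, so $\log\Xi_{Y_t,\boldsymbol{\eta}} = \log\lambda_{\boldsymbol{\eta}} + \log[\Upsilon(Y_t,\eta_1,\eta_3)\vee 0]$ and $\Xi_{Y_t,\boldsymbol{\eta}}$ is linear in $\lambda_{\boldsymbol{\eta}}$. Because $Y$ and $\boldsymbol{\lambda}$ are independent under $\mathbb{Q}$, the $\boldsymbol{\lambda}$-expectations factorise: the term $\mathbb{E}^{\mathbb{Q}}_{Y_t,\boldsymbol{\lambda}}[\nu_t^{\boldsymbol{\eta}}(Y_t^{\boldsymbol{\eta}})\log\Xi_{Y_t,\boldsymbol{\eta}}]$ contributes $\mathbb{E}^{\mathbb{Q}}_{Y_t^{\boldsymbol{\eta}}}[\nu_t^{\boldsymbol{\eta}}(Y_t^{\boldsymbol{\eta}})]\cdot\mathbb{E}^{\mathbb{Q}}_{\lambda,\boldsymbol{\eta}}[\log\lambda_{\boldsymbol{\eta}}]$ (plus a $g_\lambda^{\boldsymbol{\eta}}$-independent piece), and $\mathbb{E}^{\mathbb{Q}}_{Y_t,\boldsymbol{\lambda}}[\Xi_{Y_t,\boldsymbol{\eta}}]$ contributes $\mathbb{E}^{\mathbb{Q}}_{Y_t}[\Upsilon(Y_t,\eta_1,\eta_3)\vee0]\cdot\mathbb{E}^{\mathbb{Q}}_{\lambda,\boldsymbol{\eta}}[\lambda_{\boldsymbol{\eta}}]$. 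Writing $A^{\boldsymbol{\eta}} = \int_0^T\mathbb{E}^{\mathbb{Q}}_{Y_t^{\boldsymbol{\eta}}}[\nu_t^{\boldsymbol{\eta}}(Y_t^{\boldsymbol{\eta}})]\,dt$ and $B^{\boldsymbol{\eta}} = \int_0^T\mathbb{E}^{\mathbb{Q}}_{Y_t}[\Upsilon(Y_t,\eta_1,\eta_3)\vee0]\,dt$, the part of the objective depending on $g_\lambda^{\boldsymbol{\eta}}$ becomes, up to constants, $\int g_\lambda^{\boldsymbol{\eta}}(z)\big(\mathbb{E}^{\mathbb{Q}}_{\boldsymbol{\lambda}_{\backslash\boldsymbol{\eta}}}[\log f_{\boldsymbol{\lambda}}(\boldsymbol{z})] - \log g_\lambda^{\boldsymbol{\eta}}(z) + A^{\boldsymbol{\eta}}\log z - B^{\boldsymbol{\eta}} z\big)\mu_\lambda(dz)$.

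Then I would add a Lagrange term $\xi^{\boldsymbol{\eta}}(\int g_\lambda^{\boldsymbol{\eta}}\,d\mu_\lambda - 1)$ enforcing normalisation, take the functional derivative $\delta/\delta g_\lambda^{\boldsymbol{\eta}}(z)$, set it to zero, and solve: the stationarity condition reads $\mathbb{E}^{\mathbb{Q}}_{\boldsymbol{\lambda}_{\backslash\boldsymbol{\eta}}}[\log f_{\boldsymbol{\lambda}}(\boldsymbol{z})] - \log g_\lambda^{\boldsymbol{\eta}}(z) - 1 + A^{\boldsymbol{\eta}}\log z - B^{\boldsymbol{\eta}}z - \xi^{\boldsymbol{\eta}} = 0$, which exponentiates to exactly $g_\lambda^{\boldsymbol{\eta}}(z)\propto \exp(\mathbb{E}^{\mathbb{Q}}_{\boldsymbol{\lambda}_{\backslash\boldsymbol{\eta}}}[\log f_{\boldsymbol{\lambda}}(\boldsymbol{z})] - B^{\boldsymbol{\eta}}z)\cdot z^{A^{\boldsymbol{\eta}}}$, i.e. the claimed formula after re-expanding $A^{\boldsymbol{\eta}}$ and $B^{\boldsymbol{\eta}}$. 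I would close by noting that this is the standard mean-field/coordinate-ascent fixed point and that convexity of $x\mapsto x\log x$ makes it the maximiser over the slice, so the normalising constant exists whenever the prior and the weights make the right-hand side integrable.

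The main obstacle, I expect, is handling the $\delta\to0$ limit cleanly: for $\delta>0$ one has $\Xi_{Y_t,\boldsymbol{\eta}} = \delta + \lambda_{\boldsymbol{\eta}}[\Upsilon\vee0]$, so $\log\Xi_{Y_t,\boldsymbol{\eta}}$ is \emph{not} additively separable in $\lambda_{\boldsymbol{\eta}}$ and the $\mathbb{E}^{\mathbb{Q}}_{\boldsymbol{\lambda}_{\backslash\boldsymbol{\eta}}}$-conditional expectation does not factor through $\log\lambda_{\boldsymbol{\eta}}$; one must argue that, as $\delta\to0$, the stationarity equation converges (pointwise in $z$, and with enough uniformity to preserve the normalisation) to the separable one, being careful about states $y$ with $\Upsilon(y,\eta_1,\eta_3)\le0$ where the $\delta$-term is the only thing keeping $\log\Xi$ finite — these contribute terms that vanish in the limit but need the bound $\nu_t^{\boldsymbol{\eta}}\le\bar\nu$ and dominated convergence to be discarded rigorously. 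A secondary technical point is justifying the interchange of the functional derivative with the time integral and the $Y$-expectation, which is routine given boundedness of $\nu$ and integrability of the prior cross-entropy, but should be acknowledged.
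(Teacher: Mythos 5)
Your proposal is correct and follows essentially the same route as the paper: a Lagrangian stationarity condition in the marginal densities $g_\lambda^{\boldsymbol{\eta}}$ with a normalisation multiplier, using the $\mathbb{Q}$-independence of $Y$ and $\boldsymbol{\lambda}$ to reduce the trajectory integral to the two sufficient statistics $A^{\boldsymbol{\eta}}$ and $B^{\boldsymbol{\eta}}$, then exponentiating. The only (immaterial) difference is ordering — the paper keeps $\delta>0$ through the stationarity equation and passes to the limit only in the final display, whereas you substitute the limiting generator first; your remarks on the non-separability of $\log(\delta+\lambda_{\boldsymbol{\eta}}[\Upsilon\vee 0])$ for $\delta>0$ are in fact more careful than the paper's own treatment.
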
 

\begin{proof}
We again identify a stationary point to the Lagrangian associated with a constrained optimization problem, w.r.t to arbitrary (positive) densities $g^{\boldsymbol{\eta}}_{\lambda}$ with $ \int_{\mathbb{R}_+} g^{\boldsymbol{\eta}}_\lambda d\mu_{\lambda} =  1 $. Since $\mathbb{P}_{\boldsymbol{\lambda}}$ and $\mathbb{Q}_{\boldsymbol{\lambda}}$ share base measures, the object function can be written as
\begin{align*}
\Phi[g]&= C - \sum_{{\boldsymbol{\eta}}\in\mathcal{T}}  \mathbb{E}^{\mathbb{Q}}_{\lambda_{\boldsymbol{\eta}}} \big[ \Psi[\lambda_{\boldsymbol{\eta}},g^{\boldsymbol{\eta}}_\lambda] \big] -  \sum_{{\boldsymbol{\eta}}\in\mathcal{T}}  l^{\boldsymbol{\eta}} \Big[ \int_{\mathbb{R}_+} g^{\boldsymbol{\eta}}_\lambda d\mu_{\lambda} -  1\Big]
\end{align*}
where $\{l^{\boldsymbol{\eta}}\}_{\boldsymbol{\eta}\in\mathcal{T}}$ are non-functional Lagrange multipliers, and
\begin{align*}
\Psi[\lambda_{\boldsymbol{\eta}},g^{\boldsymbol{\eta}}_\lambda] &=  \log g_{\lambda}^{\boldsymbol{\eta}}  -   \frac{1}{|\mathcal{T}|}  \mathbb{E}^{\mathbb{Q}}_{\boldsymbol{\lambda}_{\backslash\boldsymbol{\eta}}} [   \log f_{\boldsymbol{\lambda}} ]  +   \int_0^T  \mathbb{E}^{\mathbb{Q}}_{Y_t} \bigg[  \nu^{\boldsymbol{\eta}}_t (Y^{\boldsymbol{\eta}}_t) \log \frac{ \nu^{\boldsymbol{\eta}}_t(Y^{\boldsymbol{\eta}}_t)}{\Xi_{Y_t,{\boldsymbol{\eta}}}}  + \Xi_{Y_t,\boldsymbol{\eta}} - \nu^{\boldsymbol{\eta}}_t(Y_t) \bigg] dt . 
\end{align*}
The term $C$ includes the remainder bits in the lower bound in \eqref{varboundDeveloped} that are independent of the rates $\boldsymbol{\lambda}$. It follows that
\begin{align*}
\fdv{\Phi}{g^{\boldsymbol{\eta}}_\lambda} &=  \mathbb{E}^{\mathbb{Q}}_{\boldsymbol{\lambda}_{\backslash\boldsymbol{\eta}}} [   \log f_{\boldsymbol{\lambda}} ]  -\log(g_{\lambda}^{\boldsymbol{\eta}}) - 1 - l^{\boldsymbol{\eta}}   -   \int_0^T  \mathbb{E}^{\mathbb{Q}}_{Y_t} \bigg[  \nu^{\boldsymbol{\eta}}_t (Y^{\boldsymbol{\eta}}_t) \log \frac{ \nu^{\boldsymbol{\eta}}_t(Y^{\boldsymbol{\eta}}_t)}{\Xi_{Y_t,{\boldsymbol{\eta}}}}  + \Xi_{Y_t,\boldsymbol{\eta}} - \nu^{\boldsymbol{\eta}}_t(Y_t) \bigg] dt 
\end{align*}
in its support set $\mathbb{R}_+$, for all $\boldsymbol{\eta}\in\mathcal{T}$. By equating the above to $0$, considering constraints and analysing the relevant terms up to proportionality, we note that
\begin{align*}
g_{\lambda}^{\boldsymbol{\eta}} \propto \exp( \mathbb{E}^{\mathbb{Q}}_{\boldsymbol{\lambda}_{\backslash\boldsymbol{\eta}}} [   \log f_{\boldsymbol{\lambda}} ]  +   \int_0^T  \mathbb{E}^{\mathbb{Q}}_{Y_t} \bigg[   \nu^{\boldsymbol{\eta}}_t (Y^{\boldsymbol{\eta}}_t) \log \Xi_{Y_t,{\boldsymbol{\eta}}}  - \Xi_{Y_t,\boldsymbol{\eta}} \bigg] dt ) ,
\end{align*}
so that
\begin{align*}
g_{\lambda}^{\boldsymbol{\eta}}(z) \propto e^{\mathbb{E}^{\mathbb{Q}}_{\boldsymbol{\lambda}_{\backslash\boldsymbol{\eta}}} [   \log f_{\boldsymbol{\lambda}}(\boldsymbol{z}) ] - \int_0^T z \cdot \mathbb{E}^{\mathbb{Q}}_{Y_t} [ \Upsilon(Y_t,\eta_1,\eta_3)\vee 0  ] dt + \int_0^T  \mathbb{E}^{\mathbb{Q}}_{Y^{\boldsymbol{\eta}}_t} [   \nu^{\boldsymbol{\eta}}_t (Y^{\boldsymbol{\eta}}_t) \log (\delta + z\cdot\Upsilon(Y_t,\eta_1,\eta_3)) ] dt}
\end{align*}
and
\begin{align*}
g_{\lambda}^{\boldsymbol{\eta}}(z) \propto e^{\mathbb{E}^{\mathbb{Q}}_{\boldsymbol{\lambda}_{\backslash\boldsymbol{\eta}}} [   \log f_{\boldsymbol{\lambda}}(\boldsymbol{z}) ] -z \cdot \int_0^T  \mathbb{E}^{\mathbb{Q}}_{Y_t} [ \Upsilon(Y_t,\eta_1,\eta_3)\vee 0  ] dt} \cdot z^{\int_0^T  \mathbb{E}^{\mathbb{Q}}_{Y^{\boldsymbol{\eta}}_t} [   \nu^{\boldsymbol{\eta}}_t (Y^{\boldsymbol{\eta}}_t) ] dt}
\end{align*}
as $\delta\rightarrow 0$, for $z\in\mathbb{R}_+$ and every $\boldsymbol{\eta}\in\mathcal{T}$.
\end{proof}

\begin{cor} [Conjugate prior] \label{Cor2} 
Assume that the prior density on $\boldsymbol{\lambda}$ also factors across the individual rates, s.t $d\mathbb{P}_{\boldsymbol{\lambda}}= \prod_{\boldsymbol{\eta}\in\mathcal{T}} f^{\boldsymbol{\eta}}_\lambda d\mu_{\boldsymbol{\lambda}}$, where $f_{\lambda}^{\boldsymbol{\eta}}$ for $\boldsymbol{\eta}\in\mathcal{T}$ denote Gamma density functions with shape $\alpha_{\boldsymbol{\eta}}$ and rate $\beta_{\boldsymbol{\eta}}$. Then, as $\delta \rightarrow 0$ in \eqref{deltaGenerator}, these are conjugate priors and
$$
\lambda_{\boldsymbol{\eta}} \stackrel{\mathbb{Q}}{\sim}\Gamma\Big(\alpha_{\boldsymbol{\eta}} + \int_0^T  \mathbb{E}^{\mathbb{Q}}_{Y^{\boldsymbol{\eta}}_t} [   \nu^{\boldsymbol{\eta}}_t (Y^{\boldsymbol{\eta}}_t) ] dt
, \beta_{\boldsymbol{\eta}} + \int_0^T  \mathbb{E}^{\mathbb{Q}}_{Y_t} [ \Upsilon(Y_t,\eta_1,\eta_3)\vee 0  ] dt \Big),
$$
for all $\boldsymbol{\eta}\in\mathcal{T}$.
\end{cor}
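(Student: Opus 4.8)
The plan is to substitute the assumed factorized Gamma prior directly into the characterization of the optimal $\mathbb{Q}_{\boldsymbol{\lambda}}$ obtained in Proposition \ref{Prop2} (already in its $\delta\to 0$ form) and to recognise the resulting expression, up to a normalising constant, as the kernel of a Gamma density on $(\mathbb{R}_+,\mathcal{B}(\mathbb{R}_+))$.

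First I would use the product form $d\mathbb{P}_{\boldsymbol{\lambda}}=\prod_{\boldsymbol{\eta}\in\mathcal{T}}f^{\boldsymbol{\eta}}_\lambda\,d\mu_{\boldsymbol{\lambda}}$ to split $\log f_{\boldsymbol{\lambda}}(\boldsymbol{z})=\sum_{\boldsymbol{\eta}'\in\mathcal{T}}\log f^{\boldsymbol{\eta}'}_\lambda(z_{\boldsymbol{\eta}'})$, so that the term $\mathbb{E}^{\mathbb{Q}}_{\boldsymbol{\lambda}_{\backslash\boldsymbol{\eta}}}[\log f_{\boldsymbol{\lambda}}(\boldsymbol{z})]$ appearing in Proposition \ref{Prop2} equals $\log f^{\boldsymbol{\eta}}_\lambda(z)$ plus a sum over $\boldsymbol{\eta}'\neq\boldsymbol{\eta}$ of the constants $\mathbb{E}^{\mathbb{Q}}_{\lambda_{\boldsymbol{\eta}'}}[\log f^{\boldsymbol{\eta}'}_\lambda(\lambda_{\boldsymbol{\eta}'})]$, which do not depend on $z$ and hence are absorbed into the normalising constant. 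Next I would insert $\log f^{\boldsymbol{\eta}}_\lambda(z)=(\alpha_{\boldsymbol{\eta}}-1)\log z-\beta_{\boldsymbol{\eta}} z+\mathrm{const}$ for the $\Gamma(\alpha_{\boldsymbol{\eta}},\beta_{\boldsymbol{\eta}})$ density and collect powers of $z$ and exponential factors: the exponent of $z$ merges $\alpha_{\boldsymbol{\eta}}-1$ with $\int_0^T\mathbb{E}^{\mathbb{Q}}_{Y^{\boldsymbol{\eta}}_t}[\nu^{\boldsymbol{\eta}}_t(Y^{\boldsymbol{\eta}}_t)]\,dt$, and the linear-in-$z$ term merges $-\beta_{\boldsymbol{\eta}}$ with $-\int_0^T\mathbb{E}^{\mathbb{Q}}_{Y_t}[\Upsilon(Y_t,\eta_1,\eta_3)\vee 0]\,dt$, giving
\[
g^{\boldsymbol{\eta}}_\lambda(z)\propto z^{\,\alpha_{\boldsymbol{\eta}}-1+\int_0^T\mathbb{E}^{\mathbb{Q}}_{Y^{\boldsymbol{\eta}}_t}[\nu^{\boldsymbol{\eta}}_t(Y^{\boldsymbol{\eta}}_t)]\,dt}\exp\!\Big(-z\big(\beta_{\boldsymbol{\eta}}+\textstyle\int_0^T\mathbb{E}^{\mathbb{Q}}_{Y_t}[\Upsilon(Y_t,\eta_1,\eta_3)\vee 0]\,dt\big)\Big),
\]
which is exactly the kernel of the Gamma law stated in the corollary.

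To finish, I would verify that this kernel is integrable, so that $\mathbb{Q}_{\boldsymbol{\lambda}}$ (which factors as $\prod_{\boldsymbol{\eta}}g^{\boldsymbol{\eta}}_\lambda\,d\mu_\lambda$ by construction) is a genuine probability measure and the new hyper-parameters are well defined: the shape is strictly positive since $\alpha_{\boldsymbol{\eta}}>0$ and $\nu^{\boldsymbol{\eta}}_t\geq 0$ forces the time integral to be non-negative, and the rate is strictly positive since $\beta_{\boldsymbol{\eta}}>0$ and $\Upsilon(\cdot)\vee 0\geq 0$. This simultaneously shows that the factorized Gamma family is closed under the fixed-point update defining $\mathbb{Q}_{\boldsymbol{\lambda}}$, i.e. conjugacy, and identifies each marginal as the announced $\Gamma(\,\cdot\,,\,\cdot\,)$ distribution.

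There is no real obstacle here beyond careful bookkeeping: one must be precise about what ``up to a normalizing constant'' means in Proposition \ref{Prop2} — namely that every factor independent of the single coordinate $z$ under consideration (including the $\Upsilon$-dependent constant that the $\delta\to 0$ limit strips off the $\log(\delta+z\,\Upsilon)$ term) may be discarded — and then recognise the standard exponential-family form of the Gamma density once the prior and the update contributions have been combined.
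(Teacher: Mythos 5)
Your proposal is correct and matches the route the paper intends: the corollary is left as an immediate consequence of Proposition \ref{Prop2}, and your substitution of the factorized Gamma log-density into $\mathbb{E}^{\mathbb{Q}}_{\boldsymbol{\lambda}_{\backslash\boldsymbol{\eta}}}[\log f_{\boldsymbol{\lambda}}(\boldsymbol{z})]$, absorption of the $z$-independent cross terms into the normalising constant, and identification of the resulting kernel $z^{\alpha_{\boldsymbol{\eta}}-1+\int_0^T\mathbb{E}^{\mathbb{Q}}[\nu^{\boldsymbol{\eta}}_t]dt}e^{-z(\beta_{\boldsymbol{\eta}}+\int_0^T\mathbb{E}^{\mathbb{Q}}[\Upsilon\vee 0]dt)}$ is exactly the intended bookkeeping. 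The added integrability check on the updated shape and rate parameters is a harmless (and welcome) extra.
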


Hence, as the network model with negative loads offers a better approximation of its original counterpart, we may numerically approximate posterior distributions across the infinitesimal rates in $\boldsymbol{\lambda}$, under the mean-field measure $\mathbb{Q}$. In the special case with independent Gamma prior densities, this is an easily interpretable posterior where the shape and rate parameters depend, respectively, on the integrated expected jump intensities and the integrated expected station loads.

\section{Applications} \label{exmpls}

The results in this paper suggest an iterative approximation procedure to \eqref{likRatio} by means of coordinate ascent. Here, we iteratively update the values of the various rates, functions and Lagrange multipliers while evaluating, and assessing convergence, in the lower bound \eqref{varboundDeveloped} to the log-likelihood. This is a standard approach in variational inference when looking for a (local) maxima \citep{blei2017variational}, and the problem is known to be convex.

Maximising the bound by calibrating the measure $\mathbb{Q}$ will yield an approximation to the regular conditional probability of events in $\mathcal{F}$ under $\tilde{\mathbb{P}}$, conditioned on the observations. Projected over the rates $\boldsymbol{\lambda}$,  it yields an approximation to the posterior rate density and the likelihood ratio in \eqref{likRatio}. This projected densities are valid in order to approximate the conditional distributions of the service rates $\boldsymbol{\mu}$ and routing probabilities $\mathcal{P}$ in the original queueing network system, given observationa. The final iterative procedure is described here.

\begin{itemize}
\item First, input network observations in \eqref{obsLik} and assign a (conjugate) Gamma (image) density $d\mathbb{P}_{\boldsymbol{\lambda}}$ across job transition intensities $\boldsymbol{\lambda}$, with shape parameters $\alpha_{\boldsymbol{\eta}}$ and a (shared) rate parameter $\beta_{\boldsymbol{\eta}}=\beta$, $\boldsymbol{\eta}\in\mathcal{T}$. 
\item Define a discretization grid of the time interval $[0,T]$, and operate through interpolation within points in the grid. 
\item Set an arbitrary density $d\mathbb{Q}_{\boldsymbol{\lambda}}$. Fix $\kappa_t^{\boldsymbol{\eta}}(y)=0,r_t^{\boldsymbol{\eta}}(y)=1$ and input (valid) arbitrary starting values $\mathbb{Q}(Y^{\boldsymbol{\eta}}_t=y)$, for all $t\in[0,T]$, $\boldsymbol{\eta}\in\mathcal{T}$ and $y\in\mathbb{N}_0$.
\item Iterate until convergence:
\begin{itemize}
\item In each direction $\boldsymbol{\eta}\in\mathcal{T}$, numerically compute $r_t^{\boldsymbol{\eta}}(y)$ for every $t\in[0,T]$ and $y\in\mathbb{N}_0$, by means of \eqref{diffR}-\eqref{jumpR}. Then, update intensity and slack functions $\nu_t^{\boldsymbol{\eta}}(y),\kappa_t^{\boldsymbol{\eta}}(y)$ with \eqref{ratesEq}, so that $k_t^{\boldsymbol{\eta}}(y)=0$ if
$$\frac{r_t^{\boldsymbol{\eta}}(y+1)}{r_t^{\boldsymbol{\eta}}(y)} \cdot e^{\mathbb{E}^{\mathbb{Q}}_{Y^{\backslash\boldsymbol{\eta}}_t,\boldsymbol{\lambda}} \big[ \log\Xi_{Y_t,{\boldsymbol{\eta}}} | Y_t^{\boldsymbol{\eta}}=y \big]}<\bar{\nu} ,$$
and $$\nu_t^{\boldsymbol{\eta}}(y) = \bar{\nu}, \quad k_t^{\boldsymbol{\eta}}(y) = \mathbb{Q}(Y_t^{\boldsymbol{\eta}}=y) \cdot \big[ \mathbb{E}^{\mathbb{Q}}_{Y^{\backslash\boldsymbol{\eta}}_t,\boldsymbol{\lambda}} \big[ \log\Xi_{Y_t,{\boldsymbol{\eta}}} | Y_t^{\boldsymbol{\eta}}=y \big] - \log\frac{\bar{\nu} \cdot r_t^{\boldsymbol{\eta}}(y)}{r_t^{\boldsymbol{\eta}}(y+1)} \big]$$ otherwise. Renew transient state probabilities in $Y$ by means of the master equation \eqref{masterEq}, for all $t\in [0,T]$.
\item Derive expected jump intensities $\mathbb{E}^{\mathbb{Q}}_{Y^{\boldsymbol{\eta}}_t} [   \nu^{\boldsymbol{\eta}}_t (Y^{\boldsymbol{\eta}}_t) ]$ and 
station loads $\mathbb{E}^{\mathbb{Q}}_{Y_t} [ \Upsilon(Y_t,\eta_1,\eta_3)\vee 0  ]$, for all directions $\boldsymbol{\eta}\in\mathcal{T}$ and times $t\in[0,T]$. Update $\mathbb{Q}$-densities for rates s.t.
$$
\lambda_{\boldsymbol{\eta}} \stackrel{\mathbb{Q}}{\sim}\Gamma\Big(\alpha_{\boldsymbol{\eta}} + \int_0^T  \mathbb{E}^{\mathbb{Q}}_{Y^{\boldsymbol{\eta}}_t} [   \nu^{\boldsymbol{\eta}}_t (Y^{\boldsymbol{\eta}}_t) ] dt
, \beta + \int_0^T  \mathbb{E}^{\mathbb{Q}}_{Y_t} [ \Upsilon(Y_t,\eta_1,\eta_3)\vee 0  ] dt \Big),
$$
for all $\boldsymbol{\eta}\in\mathcal{T}$. The likelihood ratio in \eqref{likRatio} can be computed at this stage.
\item Evaluate the lower bound \eqref{varboundDeveloped}, given the current densities and infinitesimal rates under the approximating measure $\mathbb{Q}$. Assess variation in the bound across iterations and establish convergence.
\end{itemize}
\item Finally, infer the structure of the various service rates and routing probabilities in the queueing network system. 
\begin{itemize}
\item Note that $\mathbb{E}^{\mathbb{Q}}_{Y_t} [ \Upsilon(Y_t,\eta_1,\eta_3)\vee 0]$ remains the same across directions $\boldsymbol{\eta}\in\mathcal{T}$ with shared origin station. Since $\mu_i^c = \sum_{\boldsymbol{\eta}\in\mathcal{T}^{\rightarrow}_{i,c}} \lambda_{\boldsymbol{\eta}}$ it holds
$$\textstyle  \mu_i^c \stackrel{\mathbb{Q}}{\sim}\Gamma\Big(|\mathcal{T}^{\rightarrow}_{i,c}|\cdot\alpha_{\boldsymbol{\eta}} + \sum_{\boldsymbol{\eta}\in\mathcal{T}^{\rightarrow}_{i,c}} \int_0^T  \mathbb{E}^{\mathbb{Q}}_{Y^{\boldsymbol{\eta}}_t} [   \nu^{\boldsymbol{\eta}}_t (Y^{\boldsymbol{\eta}}_t) ] dt
, \beta + \int_0^T  \mathbb{E}^{\mathbb{Q}}_{Y_t} [ \Upsilon(Y_t,\eta_1,\eta_3)\vee 0  ] dt \Big),$$
for all $ 0\leq i\leq M, c\in\mathcal{C}$.
\item Retrieve distributions for routing probabilities by noting that $p_{i,j}^c=\lambda_{i,j,c}/\mu_i^c$ for all $ 0\leq i,j\leq M, c\in\mathcal{C}$. This suggests a Dirichlet distribution.
\end{itemize}
\end{itemize}

The following examples treat open and closed network models; source code can be found at \href{https://github.com/IkerPerez/variationalQueues}{github.com/IkerPerez/variationalQueues.}

\subsection{Single class closed network}

We begin with a small closed network example as shown in Figure \ref{exmpl1}; this includes one FCFS service station, with $K_1=1$ processing unit, along with a delay node, together processing a population of $N$ jobs cyclically in a closed loop. All jobs belong to the same class and have equal service rates, we denote by $\mu_1$ the job processing rate within the service station. On completion, a job proceeds to the delay node where it awaits for an exponentially distributed time before being routed back to the queue. We use $\mu_0$ to denote the delay rate; and note that the arrival rate to the queue is directly proportional to the number of jobs at the delay.
\begin{figure}[h!]
\centering
\begin{tikzpicture}
\draw (0,0) -- ++(1.4cm,0) -- ++(0,-0.8cm) -- ++(-1.4cm,0);
\foreach \i in {1,...,4}
  \draw (1.4cm-\i*8pt,0) -- +(0,-0.8cm);
\filldraw[fill=black!05!white] (1.4cm+0.41cm,-0.4cm) circle [radius=0.4cm]; 

\filldraw[fill=black!05!white] (-1.9cm,0.2cm) circle [radius=0.3cm]; 
\filldraw[fill=black!05!white] (-1.9cm,-1cm) circle [radius=0.3cm]; 

\draw[->,>=latex,line width=0.20mm] (2.3cm,-0.4cm) -- ++(1cm,0cm) -- ++(0cm,-1.6cm) -- ++(-7cm,0cm) -- ++(0cm,1.6cm) --
	++(1cm,0cm);
\draw[<-,>=latex,line width=0.20mm] (-0.1cm,-0.4cm) -- ++(-1cm,0cm);
\draw[-,line width=0.20mm] (-2.7cm,-0.4cm) -- (-2.2cm,0.2cm);
\draw[-,line width=0.20mm] (-2.7cm,-0.4cm) -- (-2.2cm,-1cm);
\draw[-,line width=0.20mm] (-1.6cm,0.2cm) -- (-1.1cm,-0.4cm);
\draw[-,line width=0.20mm] (-1.6cm,-1cm) -- (-1.1cm,-0.4cm);
\node at (-1.9,-0.3cm) {$\vdots$};

\node at (1.83,-0.4cm) {$\mu_1$};
\node at (-1.9,0.2cm) {\small $\mu_0$};
\node at (-1.9,-1cm) {\small $\mu_0$};
\node at (-1.9,-1.6cm) {\small Delay};
\node at (1,-1.15cm) {\small Service station};

\end{tikzpicture}
\vspace{10pt}
\caption{Closed queueing network with a single FCFS service station and a delay.} \label{exmpl1}
\end{figure}
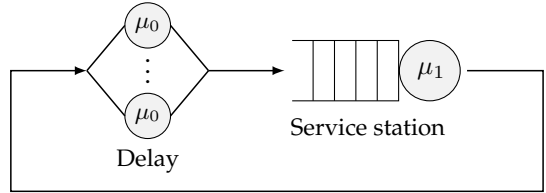
 
Both nodes are independent and $\mu_0$ is fixed in order to ensure model identifiability within the service station. In this instance, the network topology is deterministic and trivial, and the evolution of $X=(X_t)_{t\geq 0}$ monitors the total number of jobs within the service station, with $X_0=0$. The generator $Q$ of $X$ is finite and s.t.
$$Q=\kbordermatrix{
      & 0 & 1 & 2 & \dots & N-2 & N-1 & N \\
    0 & -N\mu_0 & N\mu_0 & 0 & \dots & 0 & 0 & 0\\
    1 & \mu_1 & -N\mu_0+\mu_0-\mu_1 & (N-1)\mu_0 & \dots & 0 & 0 & 0 \\
    \vdots & \vdots & \vdots & \vdots & \ddots & \vdots & \vdots & \vdots\\
    N-1 & 0 & 0 & 0 & \dots & \mu_1 & -(\mu_0+\mu_1) & \mu_0 \\
    N & 0 & 0 & 0 & \dots & 0 & \mu_1 & -\mu_1
  },$$
where row and column labels denote the number of jobs in the queueing node. Since $\mu_0$ is fixed, our interest lies in $\lambda\equiv \lambda_{1,0} = \mu_1 \cdot  p_{1,0} = \mu_1$. We assign to this rate a distribution $\mathbb{P}_\lambda\equiv \lambda_\star\mathbb{P}$ with (Gamma) density $f_\lambda$ such that its hyperparameters fix some reasonably uninformative \textit{prior} knowledge on the system.  We monitor the delay node and FCFS service station at fixed and equally spaced times $t_1<\dots<t_K$ in an interval $[0,T]$. Here, variables $O_k$ are supported on $\mathcal{O}=\{0,\dots,N\}^2$ and the observation model factors across the network components s.t.
$f_{O|x}(\boldsymbol{o}) = \tilde{f}_{O|N-x}(o_0) \cdot \tilde{f}_{O|x}(o_1)$ with $\tilde{f}_{O|x}(o)=\frac{\epsilon}{N} + \mathbb{I}(o=x) \cdot (1-\frac{N+1}{N}\epsilon)$ and
\begin{align}
\mathbb{P}(O_k^{-1}(\boldsymbol{o})|X)=
\begin{cases}
               (1-\epsilon)^2               & o_0= N-X_{t_k},\, o_1 = X_{t_k},\\
               (\epsilon/N)^2  & o_0 \neq N-X_{t_k},\, o_1 \neq X_{t_k}, \\
   			   (1-\epsilon)\cdot\epsilon/N  & \text{otherwise}.
           \end{cases} \label{obsLike}           
\end{align}
for $\epsilon>0$ and all $k=1,\dots,K$. This accounts for some $\%100\cdot\epsilon$ faulty measurements, also, we note that a system with discrete observations is approximated as $\epsilon\rightarrow 0$. Now, assume there exist some sample observations $\boldsymbol{o}_1,\dots,\boldsymbol{o}_K$ from a model realization in this closed network. These can be easily produced from \eqref{obsLike} given a trajectory $(X_t)_{t\in[0,T]}$. In order to produce the trajectory from \eqref{densPath}, given the service rates, we may employ Gillespie's algorithm \citep{gillespie1977exact} or faster \textit{uniformization} based alternatives \citep{rao13a}. 

\noindent \textbf{Remark.} The transformation $d\mathbb{P}_{\lambda|\boldsymbol{o}_1,\dots,\boldsymbol{o}_K}/d\mathbb{P}_{\lambda}$ is such that, conditioned on $\boldsymbol{o}_1,\dots,\boldsymbol{o}_K$, the distribution over $\lambda$ admits a density carried by a Lebesgue measure $\mu_\lambda$, so that $d\mathbb{P}_{\lambda|\boldsymbol{o}_1,\dots,\boldsymbol{o}_K}=f_{\lambda|\boldsymbol{o}_1,\dots,\boldsymbol{o}_K}\,d\mu_\lambda$. In this simple example, numerical MCMC procedures \citep{perez2017auxiliary} or basic \textit{generator-matrix exponentiations} combined with a forward-backward algorithm \citep{zhao2016bayesian} can offer such density approximations; however, this is reportedly \textit{very inefficient} when $N$ is large. Moreover, in involved networks/processes for complex applications (see next example), such alternatives are simply \textit{unusable} (i.e. they do not scale).

In the following, we analyse simulated data ($N=50,\,\mu_0=0.1,\,\epsilon=0.2,\,T=100,\,K=50$) by assigning a conjugate Gamma density to $\lambda\equiv\mu_1$, so that $\lambda\sim\Gamma(\alpha,\beta)$ with $\alpha=5$ and $\beta=2$ under the reference measure $\mathbb{P}$. Recall that $\mu_0$ is fixed to ensure model identifiability, and $X_t\in\{0,\dots,N\}$ denotes the number of jobs in the service station at any time $t\geq0$. For later reference, the stationary distribution of the system is given by
\begin{align*} 
\pi_{\mathbb{P}}(x|\lambda) = \lim_{t\rightarrow\infty}  \mathbb{P}(X_t=x|\lambda) = (\frac{\mu_0}{\lambda})^x\frac{1}{(N-x)!}\bigg/\sum_{x=0}^N(\frac{\mu_0}{\lambda})^x\frac{1}{(N-x)!}, 
\end{align*}
so that, assuming the observations are sufficiently spaced, and that the system has reached stationarity, it holds
\begin{align} 
\mathbb{P}\Big(\bigcap_{k=1}^K O_k^{-1}(\boldsymbol{o}_k)\big|\lambda\Big) \approx \prod_{k=1}^K \sum_{x=0}^N f_{O|x}(\boldsymbol{o}_k)  \pi_{\mathbb{P}}(x|\lambda) = \frac{\prod_{k=1}^K \sum_{x=0}^N f_{O|x}(\boldsymbol{o}_k)  (\frac{\mu_0}{\lambda})^x\frac{1}{(N-x)!}}{\Big(\sum_{x=0}^N(\frac{\mu_0}{\lambda})^x\frac{1}{(N-x)!}\Big)^K}, \label{LikelihoodStationary}
\end{align}
where $f_{O|x}(\boldsymbol{o}_k)$ is as defined in \eqref{obsLike}. Note that here $|\mathcal{T}|=2$, and the process $Y=(Y^{0,1}_t,Y^{1,0}_t)_{t\geq 0}$ monitors transitions between the delay and service station, in both the directions $0\rightarrow 1$ and $1\rightarrow 0$. The lower bound to the log-likelihood in \eqref{varboundDeveloped} reduces to
\begin{align}
\log\tilde{\mathbb{P}}(\boldsymbol{O}) \geq \sum_{k=1}^K\mathbb{E}^{\mathbb{Q}}_{Y_{t_k}} \big[ \log  \mathbb{P}(O^{-1}_k(\boldsymbol{o}_k) |Y^{0,1}_{t_k} - Y^{1,0}_{t_k})  \big]
 -\mathbb{E}^{\mathbb{Q}}_{\lambda} \bigg[  \log\frac{ g_{\lambda}^{1,0} }{ d\mathbb{P}_{\lambda}   } \bigg] + \int_0^T  \mathbb{E}^{\mathbb{Q}}_{Y_t,\lambda} \big[ \Psi[Y_t,\nu_t,\lambda] \big] dt  \label{BoundExample1}
\end{align}
with
\begin{align*}
\Psi[Y_t,\nu_t,\lambda] &= 
 \nu^{1,0}_t (Y^{1,0}_t) + \nu^{0,1}_t (Y^{0,1}_t)  - 2\delta - \lambda\cdot\mathbb{I}(Y^{0,1}_{t_k} - Y^{1,0}_{t_k} > 0) - \mu_0\cdot (N + Y^{1,0}_{t_k} - Y^{0,1}_{t_k})   \\
 & -  \nu^{1,0}_t (Y^{1,0}_t) \log \frac{  \nu^{1,0}_t (Y^{1,0}_t)}{\delta + \lambda\cdot\mathbb{I}(Y^{0,1}_{t_k} - Y^{1,0}_{t_k} > 0)} - \nu^{0,1}_t (Y^{0,1}_t) \log \frac{ \nu^{0,1}_t (Y^{0,1}_t) }{\delta + \mu_0\cdot (N + Y^{1,0}_{t_k} - Y^{0,1}_{t_k}) }   
\end{align*}
s.t. it contains only two hazard functions in the approximating measure $\mathbb{Q}$, namely $\nu^{0,1}$ and $\nu^{1,0}$. In \eqref{BoundExample1}, we again notice that the lower bound is dominated by $3$ distinguishable components, i.e. (i) the expected log-observations, (ii) the \textit{Kullback-Leibler} divergence across the service rate density, and (iii) a weighted $\mathbb{P}$-to-$\mathbb{Q}$ divergence in the expected path likelihood, further integrated along the entire network trajectory. The differential equations for functionals in \eqref{jumpR} reduce to
\begin{align*}
\frac{dr_t^{0,1}(y)}{dt} &= r_t^{0,1}(y) \big(\delta + \mu_0 \cdot \mathbb{E}^{\mathbb{Q}}_{Y^{1,0}_t} \big[ (Y^{1,0}_t - y)\vee 0)  \big] \big) \\ 
&- \frac{1+ k_t^{0,1}(y)/\mathbb{Q}(Y_t^{0,1}=y)}{e^{ k_t^{0,1}(y)/\mathbb{Q}(Y_t^{0,1}=y)}}  r_t^{0,1}(y+1) e^{\mathbb{E}^{\mathbb{Q}}_{Y^{1,0}_t} \big[ \log(\delta + \mu_0\cdot[(Y^{1,0}_t-y)\vee 0]) \big]},
\end{align*}
and
\begin{align*}
\frac{dr_t^{1,0}(y)}{dt} &= r_t^{1,0}(y)  \big(\delta + \mathbb{E}^{\mathbb{Q}}_{\lambda}[\lambda]\cdot \mathbb{Q}(Y^{0,1}_t>y) \big) \\
&- \frac{1+k_t^{1,0}(y)/\mathbb{Q}(Y_t^{1,0}=y)}{e^{ k_t^{1,0}(y)/\mathbb{Q}(Y_t^{1,0}=y)}}  r_t^{1,0}(y+1) e^{\mathbb{E}^{\mathbb{Q}}_{Y^{0,1}_t,\boldsymbol{\lambda}} \big[ \log(\delta + \lambda \cdot \mathbb{I}(Y^{0,1}_t>y)) \big]} .
\end{align*}

\begin{figure}[h!]
  \centering
    \includegraphics[width=0.98\textwidth]{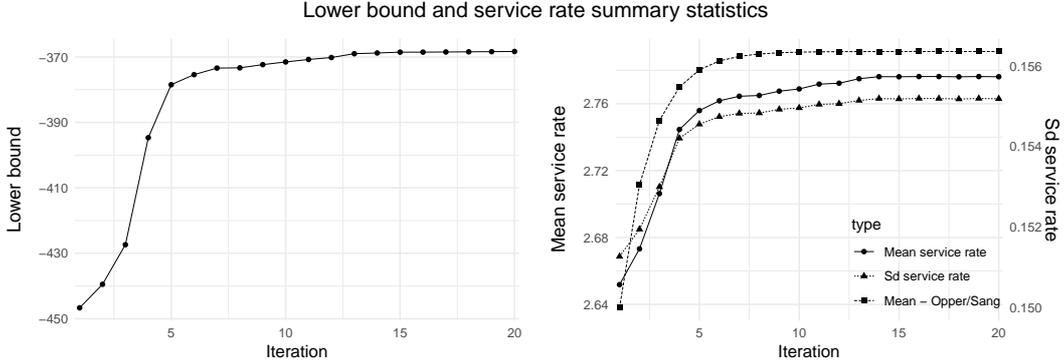}   
   \caption{Left, evolution of lower bound to log-likelihood during the inferential procedure. Right, evolution of mean and standard deviation values for $\lambda$ under $\mathbb{Q}$, along with point estimates from a traiditonal variational procedure.} \label{Example1Bound}
\end{figure}

In Figure \ref{Example1Bound} (left) we observe the evolution of the lower bound \eqref{BoundExample1} during the iterative inferential procedure, for a sufficiently small and negligible value of $\delta$. There, we notice that the procedure has converged to a (local) optima within approximately $13$ iterations. On the right hand side of the Figure, we further observe summary statistics (mean and standard deviation) for $\lambda$ under the approximating measure $\mathbb{Q}$; along with iterative estimations (point estimates) obtained from employing variational procedures in \cite{opper2008variational}. Next, in Figure \ref{outputQueue1} (left) we find the $\mathbb{P}$-prior density for $\lambda$ with the $\mathbb{Q}$-posterior superimposed (in gray); along with them, the red/blue densities represent 
\begin{itemize}
\item posterior density through \textit{Metropolis-Hastings} Markov chain Monte Carlo, by means of strong stationarity assumptions leading to the likelihood function shown in \eqref{LikelihoodStationary},
\item and approximate density extracted by adapting variational procedures in \cite{opper2008variational}, to allow for prior knowledge and conjugacy properties.
\end{itemize}
\begin{figure}[h!]
  \centering
    \includegraphics[width=0.98\textwidth]{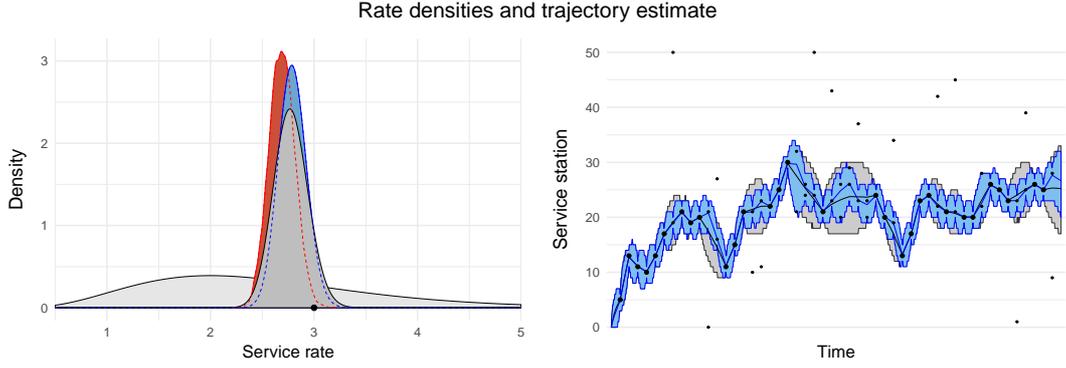}   
   \caption{Left, prior and posterior densities for $\lambda$; along with MCMC and traditional variational density estimates. The black dot on the horizontal axis represents the original value in the network simulation. Right, network observations along with mean-average network trajectory and $95\%$ credible interval for job counts in the service station; in gray, our proposed method, in blue, existing variational alernative method.} \label{outputQueue1}
\end{figure}
On the right hand side we observe the network observations on both the service station and delay node. Delay node observations are displayed by subtracting their value from the job population $N$ (thus representing a second measurement on the service station). Whenever both observations match, these are displayed with a large-sized dot. Along with it, we find:
\begin{itemize}
\item In gray, a mean-average network trajectory and $95\%$ credible interval for job counts on the service station $X_t=(Y^{0,1}_t-Y^{1,0}_t)_{t\geq 0}$, under the approximating measure $\mathbb{Q}$ and with methods introduced in this paper.
\item In blue, a similar confidence interval and mean-average path obtained using benchmark methods in \cite{opper2008variational}.
\end{itemize}
Noticeably, the average mean-field trajectory for our proposed variational technique flows through the most informative observations (thick dots), and the credible interval widens-up to account for some faulty measurements within either network node. On the other hand, traditional variational approaches quickly converge to a local optima, and restrict mean-average dynamics further compressing confidence intervals in regions with noisy data. In the next example, we notice how this poses a problem for traditional methods; that is, within complex and \textit{synchronized} stochastic processes we will fail to obtain sensible estimates for network parameters. Moreover, inference by MCMC/forward-backwards methods in our next example is virtually intractable \citep[cf.][]{perez2017auxiliary}.

\subsection{Multi-class parallel tandems with bottleneck and service priorities} \label{mainExample}

We analyse an open multi-class queueing network as pictured in Figure \ref{exmpl2}. In this network, there exists two classes ($c=1,2$) of jobs that simultaneously transit the system. The first class consists of high priority jobs with low arrival and service intensity rates. The second class includes low priority jobs with high arrival and service rates. Once a job enters the system, a probabilistic routing junction (pictured as a square within the Figure) sends this job through either a \textit{PS} or \textit{priority-FCFS} tandem; later, it will be serviced within an \textit{infinite} node before leaving the network. In the top processor-sharing tandem, each station has 5 processing units; these will fraction their working capacity as seen in \eqref{Upsil1}, in order simultaneously service all jobs regardless of their class and priority level, however, service rates will differ depending on the job class. On the contrary, the bottom tandem includes two \textit{FCFS} stations with a single processing unit and priority scheduling. Within these nodes, low priority jobs are \textit{only} serviced if each station is \textit{fully empty} of any high priority jobs; consequently, station loads in \eqref{Upsil2} are rewritten s.t.
\begin{align*}
\textstyle\textstyle\Upsilon(y,i,1) = 1 \wedge x_{i,1}   \quad \text{and} \quad \textstyle\textstyle\Upsilon(y,i,2) = (1 \wedge x_{i,2}) \cdot \mathbb{I}(x_{i,1} <1),
\end{align*}
at stations $i\in\{2,4\}$ and for any $y\in\mathcal{S}^Y$, where we recall $$x_{i,c} = \sum_{{\boldsymbol{\eta}}\in\mathcal{T}^{\leftarrow}_{i,c}}  y_{{\boldsymbol{\eta}}} -  \sum_{{\boldsymbol{\eta}}\in\mathcal{T}^{\rightarrow}_{i,c}} y_{{\boldsymbol{\eta}}} $$ and thus $x_{2,c}=y_{0,2,c}-y_{2,4,c}, x_{4,c}=y_{2,4,c}-y_{4,5,c}$, for $c=1,2$. Due to the presence of service priorities, the ordering of jobs within the queue is irrelevant (this is also the case with \textit{random order} disciplines); hence, our inferential framework allows for different service rates assigned to jobs in each class. Finally, the last service node includes an infinite amount of processing units, and processing rates also differ depending on the job class.
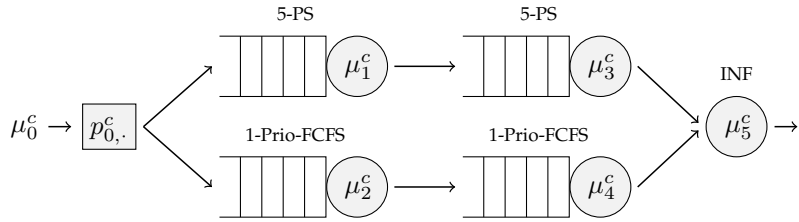
\begin{figure}[h!]
\centering
\begin{tikzpicture}
\draw (0,0.1) -- ++(1.4cm,0) -- ++(0,-0.8cm) -- ++(-1.4cm,0);
\foreach \i in {1,...,4}
  \draw (1.4cm-\i*8pt,0.1) -- +(0,-0.8cm);
\filldraw[fill=black!05!white] (1.4cm+0.41cm,-0.3cm) circle [radius=0.4cm]; 
\node at (1,0.4cm) {\scriptsize 5-PS};

\draw (-3.2,0.1) -- ++(1.4cm,0) -- ++(0,-0.8cm) -- ++(-1.4cm,0);
\foreach \i in {1,...,4}
  \draw (-1.8cm-\i*8pt,0.1) -- +(0,-0.8cm);
\filldraw[fill=black!05!white] (-1.8cm+0.41cm,-0.3cm) circle [radius=0.4cm]; 
\node at (-2.2,0.4cm) {\scriptsize 5-PS};

\draw (0,-1.5cm) -- ++(1.4cm,0) -- ++(0,-0.8cm) -- ++(-1.4cm,0);
\foreach \i in {1,...,4}
  \draw (1.4cm-\i*8pt,-1.5cm) -- +(0,-0.8cm);
\filldraw[fill=black!05!white] (1.4cm+0.41cm,-1.4cm-0.5cm) circle [radius=0.4cm]; 
\node at (1,-1.2cm) {\scriptsize 1-Prio-FCFS};

\draw (-3.2,-1.5cm) -- ++(1.4cm,0) -- ++(0,-0.8cm) -- ++(-1.4cm,0);
\foreach \i in {1,...,4}
  \draw (-1.8cm-\i*8pt,-1.5cm) -- +(0,-0.8cm);
\filldraw[fill=black!05!white] (-1.8cm+0.41cm,-1.4cm-0.5cm) circle [radius=0.4cm]; 
\node at (-2.2,-1.2cm) {\scriptsize 1-Prio-FCFS};

\filldraw[fill=black!05!white] (3.6,-0.7cm-0.4cm) circle [radius=0.4cm]; 
\node at (3.6,-0.4cm) {\scriptsize INF};

\filldraw[fill=black!05!white] (-5cm,-0.8cm) rectangle (-4.3cm,-1.4cm);

\draw[->,line width=0.20mm] (-4.2cm,-1.1cm) -- (-3.3,-0.3cm);
\draw[->,line width=0.20mm] (-4.2cm,-1.1cm) -- (-3.3,-1.9cm);
\draw[->,line width=0.20mm] (-0.9cm,-0.3cm) -- (-0.1,-0.3cm);
\draw[->,line width=0.20mm] (-0.9cm,-1.9cm) -- (-0.1,-1.9cm);
\draw[<-,line width=0.20mm] (-5.15cm,-1.1cm) -- +(-0.3cm,0cm) node[left] {$\mu^c_0$};
\draw[->,line width=0.20mm] (2.3cm,-0.3cm) -- (3.1cm,-1.05cm);
\draw[->,line width=0.20mm] (2.3cm,-1.9cm) -- (3.1cm,-1.15cm);
\draw[->,line width=0.20mm] (4.1cm,-1.1cm) -- +(0.3cm,0cm);

\node at (-1.37,-0.3cm) {$\mu_1^c$};
\node at (-1.37,-1.9cm) {$\mu_2^c$};
\node at (1.83,-0.3cm) {$\mu_3^c$};
\node at (1.83,-1.9cm) {$\mu_4^c$};
\node at (3.63,-1.1cm) {$\mu_5^c$};
\node[align=center] at (-4.65cm,-1.15cm) {$p^c_{0,\cdot}$};
\end{tikzpicture}
\vspace{10pt}
\caption{Open queueing network with one routing juntion (pictured as a square) and 5 service stations with varied disciplines and processing rates.} \label{exmpl2}
\end{figure}

We analyse synthetic data created during a time interval $[0,T]$ ($T=100$), with arrival intensities $\mu_0^1=0.5,\mu_0^2=3$, routing probabilities $p^c_{0,i}=0.5$, $i,c\in\{1,2\}$ and service rates as shown in Table \ref{summaryResults}. We collect a reduced set of \textit{noiseless} and equally spaced observations with $K=50$; these are essentially snapshots of the full system state across its service stations and job classes, s.t. $\mathcal{O}=\mathbb{N}^{10}$ and the observation density in \eqref{obsLik} is defined with
$$f_{O|x}(\boldsymbol{o}) = \prod_{i=1}^5 \prod_{c=1}^2 \mathbb{I}(x_{i,c} = o_{i,c})$$
for $x\in\mathcal{S}$, where $o_{i,c}$ is an indexed observation in the element $\boldsymbol{o}$ denoting the class-$c$ queue length at station $i>0$ \footnote{Source code for the data simulation process may be found at \href{https://github.com/IkerPerez/variationalQueues/tree/master/Example 2}{github.com/IkerPerez/variationalQueues.}}. Within the inferential procedure, this observation likelihood must be approximated with some regularized variant similar to \eqref{obsLike}, while taking $\epsilon\rightarrow 0$. Next, we assign conjugate Gamma priors to the various service intensities; in order to ensure identifiability in the problem, arrival rates and routing probabilities are fixed and we focus this inferential task on the various service stations. Hence $\boldsymbol{\lambda}\equiv \{\mu^c_i\, :\, c=1,2 \; \text{and} \; i=1,\cdots,5\}$, and we set $\lambda_{\boldsymbol{\eta}}\sim\Gamma(1,0.3)$ under the reference measure $\mathbb{P}$, for all $\boldsymbol{\eta}\in\mathcal{T}$.  

In the following, we omit the cumbersome mathematical details related to this complex model formulation, and we focus on discussing prior choices, calibration of the algorithm, results and method comparisons following the inferential procedure. 

\noindent\textbf{Remarks on using MCMC data-augmentation for inference.} Transient inference in a stochastic system with priorities is specially challenging, due to the strong dependencies this generates on the queue lengths across the nodes and classes. Specifically,
\begin{itemize}
\item data-augmentation methods relying on MCMC techniques do not scale \citep[cf.][]{sutton2011,perez2017auxiliary}, as dependences yield very autocorrelated output chains,
\item there exist no analytic product-form distributions to enable approximate inferential methods under assumptions of system stationarity, as discussed in the previous example,
\item \textit{generator-matrix} exponentiations with a forward-backward algorithm \citep{zhao2016bayesian} are simply unscalable to such large multivariate systems.
\end{itemize}

\noindent\textbf{Remarks on using benchmark variational methods for inference.} Note that traditional variational methods \citep[cf.][]{opper2008variational,cohn2010mean} are centred around populations or \textit{lengths} in the individual queues. In this example, \textit{populations} \textbf{may not} be factorized under an approximating measure $\mathbb{Q}$, since system jumps are \textit{synchronized}; i.e. a jump \textit{down} in one queue corresponds to a jump \textit{up} in another. As a consequence, pairs of approximating rates under $\mathbb{Q}$ will be interlinked with the same \textit{real} transition rate under $\mathbb{P}$, and derivations such as the lower bound in \eqref{varboundDeveloped}, or equations \eqref{ratesEq}-\eqref{jumpR} are unattainable. For the sake of completeness and comparisons, we adapt existing variational algorithms to the current task; however, we must
\begin{itemize}
\item allow a factorization $\mathbb{Q}(X_t=\boldsymbol{x})=\prod_{i}\mathbb{Q}(X_{i,t}=x_i)$, s.t. jobs may be virtually created and removed in any queue; i.e. jobs do not transition a network, they reach and depart servers individually. The full population of jobs in the network is not preserved,
\item duplicate intensities for transitions in the real model; that is, we have a rate for (i) a job departing a queue, and (ii) the job arriving at another. Technically, a job could arrive at a new server \textbf{before} it departs the previous one; synchronization is lost; point estimates for parameters are averaged across pairs and weighted for network load.
\end{itemize}
We will see, this leads to drastic performance issues that deem the method unusable.

\subsubsection{Algorithm calibration}
Within our method, prior choices in the system state $Y$ must initially accommodate a strictly positive, albeit not necessarily large, likelihood for low-priority jobs to be serviced at any point in time. Here, we achieve this by means of assigning Poisson process priors to task transition counts in $Y$; that is, we first run the master equation \eqref{masterEq} with some user-specified constant intensity rates. This creates monotone mean average queue lengths in the service nodes, and we ensure they flow aligned to the network observations in every instance. 

Also, the presence of strong temporal dependencies will often trigger the approximating rates $\nu$ in \eqref{ratesEq} to become unreasonably large, ultimately deeming the algorithm computationally \textit{unfeasible}. This is a phenomena also observed in \cite{opper2008variational} or \cite{cohn2010mean}, within the context of simpler stochastic dynamics. To ensure computational tractability, we exploit the capping functionals $k$ as in introduced in Proposition \ref{Prop1}, and set a global rate cap of $\bar{\nu}=50$. Furthermore, we run the differential equations for $r$ in \eqref{jumpR} in log-form. Specific details can be found within the aforementioned source code.

\subsubsection{Results}

Within the plots in Figure \ref{Fulltrayectories}, with the exception of the bottom right one, we observe $95\%$ credible intervals for the queue length processes $X^{i,c}_t$ over time, across the various service stations and job classes. There, intervals in dark gray colour relate to high priority jobs, and their corresponding queue length observations are represented by black circles. This information is superimposed over its analogue for low priority jobs, where intervals are coloured in light gray and observations represented by small diamonds. These interval approximations ignore small positive densities that are sometimes assigned to negative queue lengths. Note that this is a consequence of employing counts across job transitions in $Y$ as a basis for inference on $X$, however, we recall this is a necessity in order to overcome the coupling challenges described in Sections \ref{introSec} and \ref{countModelSec}. Overall, we note that the mean-field flow captures well the collected observations, with some few exceptions in the nodes with priority scheduling; hence, it offers a good basis to build approximate estimates for parameters and the likelihood ratio \eqref{likRatio}.

\begin{figure}[h!]
  \centering
    \includegraphics[width=0.98\textwidth]{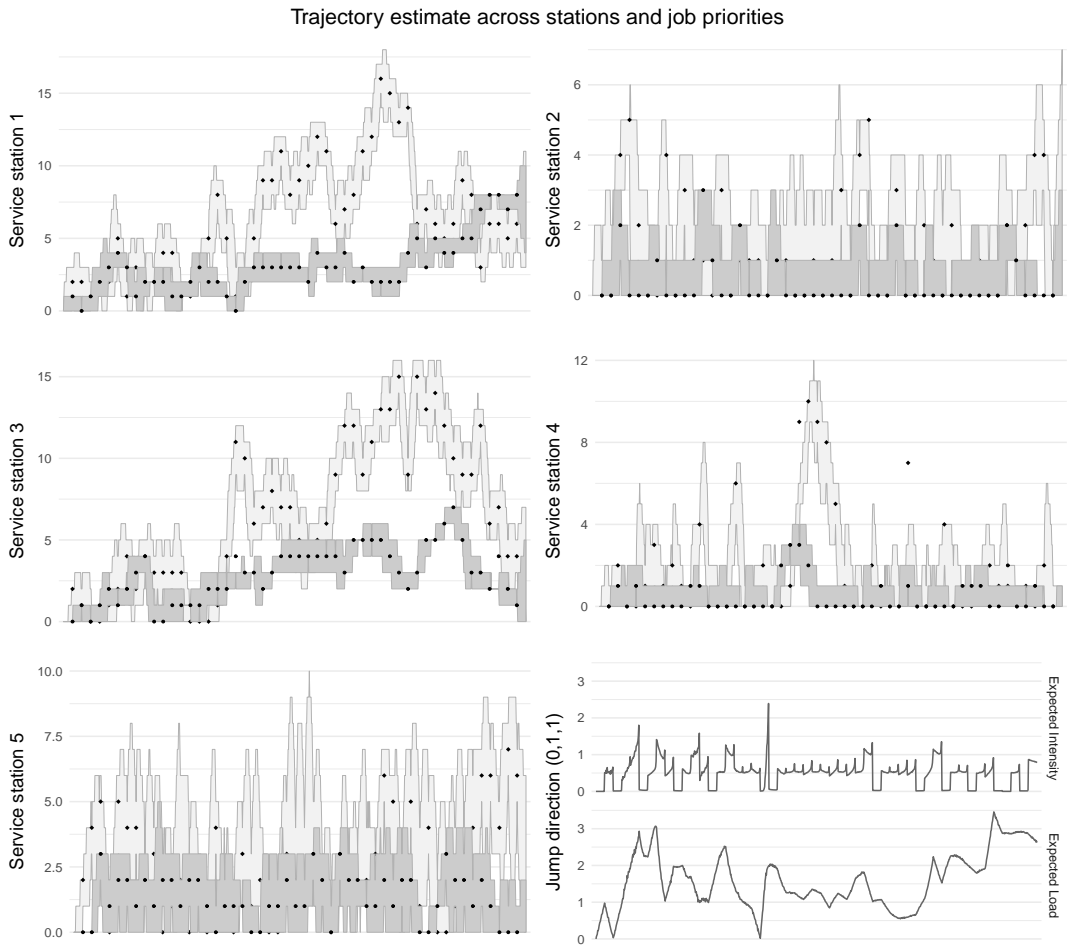}   
   \caption{$95\%$ credible intervals for queue lengths across the service stations. Dark (light) gray corresponds to high (low) priority jobs. Also, expected jump intensity and station load in the direction $\eta=(0,1,1)$. } \label{Fulltrayectories}
\end{figure}

The bottom right plot in Figure \ref{Fulltrayectories} shows an overview of the expected jump intensity $\mathbb{E}^{\mathbb{Q}}_{Y^{\boldsymbol{\eta}}_t} [   \nu^{\boldsymbol{\eta}}_t (Y^{\boldsymbol{\eta}}_t) ]$ and station load $\mathbb{E}^{\mathbb{Q}}_{Y_t} [ \Upsilon(Y_t,\eta_1,\eta_3)\vee 0  ]$ in the direction $\boldsymbol{\eta}=(0,1,1)$ at times $t\in[0,T]$. The sharp peaks in the intensities come at observations times, and ensure the process density transits through the observations. Finally, we notice that the expected station load differs from the estimate of the high-priority queue-length in node $1$, as this process combines and weights the queue-length across the two priorities according to \eqref{Upsil1}.

Next, we find in Table \ref{summaryResults} summary statistics for the posterior service rates under the approximating mean-field measure $\mathbb{Q}$, along with point estimates by adapting benchmark variational techniques in \cite{opper2008variational}. There, we observe how the proposed framework allows for us to gain a good overview of the system properties and variability in the processing speed across the various stations; while existing methods are far from offering reasonable approximations to system behaviour (they instead seem to construct an \textit{averaged} estimation of network flow). Noticeably, there exists a few significant deviations from real values, within the posterior estimates for high priority service rates in \textit{PS} nodes. This is likely due to a combination of sampling variance, high model complexity and the limitations of such approximate variational procedures for transient analyses of stochastic processes.

\begin{table*}[h!]
\caption{Summary statistics for posterior service rates in the queueing network in Figure \ref{exmpl2}.} \label{summaryResults}
\vspace{5pt}
\renewcommand{\arraystretch}{1.1}
\setlength{\tabcolsep}{9pt}
\centering
\begin{tabular}{llllllllll}
\hline
      & Real & O/S & \multicolumn{2}{l}{Summary} & \multicolumn{5}{l}{Quantiles}      \\ \cline{4-10} 
    & &  & Mean        & StDev      & 2.5\%   & 25\%    & 50\%    & 75\%    & 97.5\%     \\ \hline \addlinespace[0.1cm] \hline
$\mu_1^1$ & \textbf{0.25} & 0.364 & 0.307       & 0.043         & 0.228 & 0.276 & 0.304 & 0.335 & 0.397   \\
$\mu_2^1$ & \textbf{1.5} & 1.242 & 1.387      & 0.188         & 1.043 & 1.256 & 1.378 & 1.508 & 1.778       \\
$\mu_3^1$ & \textbf{0.25} & 0.421 & 0.339       & 0.049         & 0.250 & 0.305 & 0.337 & 0.371 & 0.442           \\
$\mu_4^1$ & \textbf{1.5}  & 1.482 & 1.635       & 0.219         & 1.233 & 1.482 & 1.625 & 1.777 & 2.093     \\
$\mu_5^1$ & \textbf{0.5}  & 0.837 & 0.761       & 0.075         & 0.622 & 0.709 & 0.758 & 0.810 & 0.915      \\ \hline \addlinespace[0.1cm] \hline
$\mu_1^2$ & \textbf{0.5}  & 0.551 & 0.501       & 0.041         & 0.424 & 0.473 & 0.499 & 0.528 & 0.584     \\
$\mu_2^2$ & \textbf{4.0}  & 3.496 & 3.740       & 0.298         & 3.177 & 3.534 & 3.731 & 3.935 & 4.346    \\
$\mu_3^2$ & \textbf{0.5}  & 0.568 & 0.504       & 0.041         & 0.425 & 0.475 & 0.502 & 0.531 & 0.588     \\
$\mu_4^2$ & \textbf{4.0}  & 3.265 & 3.670       & 0.296         & 3.112 & 3.465 & 3.661 & 3.863 & 4.270     \\
$\mu_5^2$ & \textbf{1.0}  & 0.976 & 0.984       & 0.056         & 0.877 & 0.946 & 0.983 & 1.021 & 1.097   \\ \hline
\end{tabular}
\end{table*}

\section{Discussion} \label{discussion}

In this paper, we have enabled the variational evaluation of approximating mean-field measures for partially-observed \textit{coupled} systems of jump stochastic processes, with a focus on mixed systems of queueing networks. We furthermore have presented a flexible approximate Bayesian framework, capable of overcoming the challenges posed by coupling properties, and applicable in scenarios where existing MCMC or variational solutions are unusable. To achieve this goal, we have built on existing variational mean-field theory \citep[see][]{opper2008variational,cohn2010mean}, and discussed an alternate optimization procedure with slack variables and inequality constraints that can address computational limitations within existing techniques. Notably, results within this paper contribute to existing Bayesian statistical literature in \cite{sutton2011,Wang2016,perez2017auxiliary}, and first allow for the study of the latent stochastic behaviour across complex mixed network models, by means of an augmented process for interactions in the resources. 

Even though the proposed framework relies on an approximated network model as a basis for inference (which ensures the absolute continuity across base measures), and while it further analyses queue-lengths by means of augmented job transitions in the resources, we have shown we can reliably capture the finite-dimensional posterior distributions of the various marginal stochastic processes, and offer a good overview of the network structure and likely flow of workload. This is important as it can enable the evaluation and uncertainty quantification tasks in several networked systems found in many application domains, where full data observations may be hard to retrieve. Currently, existing state-of-the-art alternatives rely on strong assumptions leading to stationary analyses of such systems, or use alternate MCMC procedures that reportedly find limitations due to existing computational constrains \citep{sutton2011,perez2017auxiliary}. 

\bibliographystyle{apa}     
{\small \bibliography{bibliography}}

\appendix 

\section{Construction} \label{notationApdx}

Let $(\Omega,\mathcal{F})$ be a measurable space with the regular conditional probability property; also, let $0\leq t_1 < \dots < t_K \leq T$ be some fixed observation times, with $T>0$. In a standard queueing network with $M$ stations, $\Omega$ may denote a product set supporting instantaneous rates, trajectories and observations, and $\mathcal{F}$ the corresponding product $\sigma$-algebra. The space of rates and observations will consist of trivial Borel algebras and power sets, so that $\boldsymbol{\lambda}$ is an $(\mathbb{R}_+^n,\mathcal{B}(\mathbb{R}_+^n))$-valued random variable of rates in the infinitesimal generator matrix $Q$ of $X$, where $n\in\mathbb{N}$ denotes an arbitrary number determined by the network topology. In addition, $\{O_k : k=1,\dots,K\}$ corresponds to random measurement variables for the network monitoring activity, each defined on $(\mathcal{O},\mathcal{P}(\mathcal{O}))$, where $\mathcal{O}$ denotes an arbitrary countable support set for observations in every service station. A network trajectory $X=(X_t)_{0\leq t\leq T}$ is an $(\mathcal{S},\mathcal{P}(\mathcal{S}))$-valued stochastic process with a countably infinite support set $\mathcal{S}$. Note that this is a piecewise deterministic jump-process, so that $X=(\boldsymbol{t},\boldsymbol{x})$ is formed by a sequence of transition times $\boldsymbol{t}$ along with states $\boldsymbol{x}$. Every pair $(\boldsymbol{t},\boldsymbol{x})$ can be further defined as a random variable on a measurable space $(\mathcal{X},\Sigma_\mathcal{X})$, with $\mathcal{X}=\cup_{i=0}^\infty ([0,T]\times\mathcal{S})^i$ and the corresponding union $\sigma$-algebra  $\Sigma_\mathcal{X}$. This space can support all finite $\mathcal{S}$-valued trajectories and allows the assignment of a dominating base measure $\mu_\mathcal{X}$ w.r.t which define a trajectory density. For details, we refer the reader to \cite{daley2007introduction}. 

Let $\mathbb{P}$ be a reference probability measure on $(\Omega,\mathcal{F})$. For all $A\in\mathcal{B}(\mathbb{R}^n_+)$, we write $$\mathbb{P}(\boldsymbol{\lambda}^{-1}(A))=\mathbb{P}_{\boldsymbol{\lambda}}(A) =\int_{A}f_{\boldsymbol{\lambda}}(\boldsymbol{a})\,\mu_{\mathbb{R}_+^n}(d\boldsymbol{a}),$$
where $f_{\boldsymbol{\lambda}}$ denotes the joint density function of $n$ independent Gamma distributed variables. Hence, we assume that the distribution of instantaneous rates under $\mathbb{P}$ admits a density carried by a (Lebesgue) measure $\mu_{\mathbb{R}_+^n}$. Next, let $\kappa_1:\mathcal{F}\times\mathbb{R}^n_+\rightarrow[0,1]$ be a regular conditional probability; i.e. a Markov kernel that defines a probability measure on $\mathcal{F}$ for all $\boldsymbol{\lambda}\in\mathbb{R}^n_+$, with
$$\mathbb{P}(B\cap\boldsymbol{\lambda}^{-1}(A))=\int_A \kappa_1(B,\boldsymbol{a}) \, f_{\boldsymbol{\lambda}}(\boldsymbol{a})\,\mu_{\mathbb{R}_+^n}(d\boldsymbol{a})$$
for $A\in\mathcal{B}(\mathbb{R}^n_+)$ and $B\in\mathcal{F}$. By definition, $\kappa_1(B,\boldsymbol{a})= \mathbb{P}(B|\boldsymbol{\lambda}=\boldsymbol{a})$ and most importantly
$$\kappa_1(X^{-1}(C),\boldsymbol{a}) = \int_C f_{X|\boldsymbol{\lambda}=\boldsymbol{a}}(\boldsymbol{t},\boldsymbol{x}) \,\mu_\mathcal{X}(d\boldsymbol{t},d\boldsymbol{x})$$
for all $C\in\Sigma_{\mathcal{X}}$ (note this often poses an intractable integral). The conditional density $f_{X|\boldsymbol{\lambda}=\boldsymbol{a}}$ is such that for every $I\in \mathbb{N}$ and pair of ordered times $\boldsymbol{t}=\{0,t_1,\dots,t_I\}$ in $[0,T]$ and states $\boldsymbol{x}=\{x_0,\dots,x_I\}$ in $\mathcal{S}$ we have
$$
f_{X|\boldsymbol{\lambda}=\boldsymbol{a}}(\boldsymbol{t},\boldsymbol{x}) = \pi(x_0)\,
 e^{Q_{x_{I}}(T-t_I)}\, \prod_{i=1}^I Q_{x_{i-1},x_{i}}\, e^{Q_{x_{i-1}}(t_i-t_{i-1})}, $$
where $Q\equiv Q(\boldsymbol{a})$ is the matrix of infinitesimal transition rates in $X$ associated to values in $\boldsymbol{a}$. Finally, network observations are assumed to be discrete events, independent of transition rates given a trajectory. Thus, there exists a kernel $\kappa_2:\mathcal{F}\times(\mathcal{X} \times \mathbb{R}^n_+)\rightarrow[0,1]$ s.t.
$$\mathbb{P}(O_k\in D | X = (\boldsymbol{t},\boldsymbol{x}), \boldsymbol{\lambda}=\boldsymbol{a}) = \kappa_2(O_k^{-1}(D), (\boldsymbol{t},\boldsymbol{x}),\boldsymbol{a}) = \sum_{d\in D} f_{O_k|(\boldsymbol{t},\boldsymbol{x})}(d) \, \mu_\mathcal{O}(d) $$
for all $k=1,\dots,K$ and $D\in\mathcal{P}(\mathcal{O})$. Here, $f_{O_k|(\boldsymbol{t},\boldsymbol{x})}$ defines an arbitrary probability mass function on $\mathcal{O}$ carried by a counting measure; in our applications, each observation only depends on the state of the system at the observation time, so the above expression could be further simplified. 

Under the above model construction, the support over infinitesimal rates is a standard Borel space and the existence of a posterior distribution is guaranteed (cf. \citet{Orbanz2010}). Also, measures induced by the kernel $\kappa_2$ are $\sigma$-finite and such that $\kappa_2(\cdot,(\boldsymbol{t},\boldsymbol{x}),\boldsymbol{a})<<\mu_{\mathcal{O}}$, for every $((\boldsymbol{t},\boldsymbol{x}),\boldsymbol{a})\in\mathcal{X}\times\mathbb{R}^n_+$. The posterior is thus carried by its corresponding prior and defined by means of the Radon-Nikodym derivative
$$\frac{d\mathbb{P}_{\boldsymbol{\lambda}|O_1=o_1,\dots,O_K=o_K}}{d\mathbb{P}_{\boldsymbol{\lambda}}}(\boldsymbol{a}) = \frac{\int_{\mathcal{X}}\prod_{k=1}^K f_{O_k|(\boldsymbol{t},\boldsymbol{x})}(o_k)\, f_{X|\boldsymbol{\lambda}=\boldsymbol{a}}(\boldsymbol{t},\boldsymbol{x})\, \mu_{\mathcal{X}}(d\boldsymbol{t},d\boldsymbol{x})}{\int_{\mathbb{R}^n_+}\int_{\mathcal{X}}\prod_{k=1}^K f_{O_k|(\boldsymbol{t},\boldsymbol{x})}(o_k)\, f_{X|\boldsymbol{\lambda}=\boldsymbol{a}}(\boldsymbol{t},\boldsymbol{x})\, \mu_{\mathcal{X}}(d\boldsymbol{t},d\boldsymbol{x})\, f_{\boldsymbol{\lambda}}(\boldsymbol{a})\mu_{\mathbb{R}^n_+}(d\boldsymbol{a}) } ,$$
where we employ the shorthand notation $d\mathbb{P}_{\boldsymbol{\lambda}|\cdot}(\boldsymbol{a}) = \mathbb{P}_{\boldsymbol{\lambda}}(d\boldsymbol{a}|\cdot)$.

\end{document}